\newtheorem{theorem}{Theorem}
\begin{document}

\title{Achieving Interference-Free Degrees of Freedom in Cellular Networks via RIS }

\author{
	\IEEEauthorblockN{Junzhi Wang, Jun Sun$^{*}$, Zheng Xiao, Limin Liao, Yingzhuang Liu}\\
 \thanks{The authors are with The School of Electronic Information and Communications, Huazhong University of Science and Technology, Wuhan, China. E-mails: \{wangjunzhi, juns, xiaoz, liaolimin001, liuyz\}@hust.edu.cn}
} 

\maketitle

\begin{abstract}
It's widely perceived that Reconfigurable Intelligent Surfaces (RIS) cannot increase Degrees of Freedom (DoF) due to their relay nature. A notable exception is Jiang \& Yu's work. They demonstrate via simulation that in an ideal $K$-user interference channel, passive RIS can achieve the interference-free DoF.
In this paper, we investigate the DoF gain of RIS in more realistic systems, namely cellular networks, and more challenging scenarios with direct links. We prove that RIS can boost the DoF  per cell to that of the interference-free scenario even \textit{ with direct-links}. Furthermore, we \textit{theoretically} quantify the number of RIS elements required to achieve that goal, i.e. $max\left\{ {2L, (\sqrt L  + c)\eta+L } \right\}$ (where $L=GM(GM-1)$, $c$ is a constant and $\eta$ denotes the ratio of channel strength) for the $G$-cells with more single-antenna users $K$ than base station antennas $M$ per cell. 
The main challenge lies in addressing the feasibility of a system of algebraic equations, which is difficult by itself in algebraic geometry. We tackle this problem in a probabilistic way, by exploiting the randomness of the involved coefficients and addressing the problem from the perspective of extreme value statistics and convex geometry.  
Moreover, numerical results confirm the tightness of our theoretical results.

\end{abstract}

\begin{IEEEkeywords}
Reconfigurable intelligent surface, degree of freedom, cellular networks, interference nulling.
\end{IEEEkeywords}

\section{Introduction}
\IEEEPARstart{R}{econfigurable} intelligent surface (RIS) has been envisioned as one of the most promising technologies for the next generation of wireless communication systems due to its low cost and ability to adjust the propagation of signals through passive reflection \cite{wu2021intelligent,ZhouGui,long2021active,10274111}. 
Since its instantaneous reflection property, RIS can intelligently modify the channel conditions by adjusting the phase of each reflecting element. For example, it can transform a fully connected wireless system into a partially connected one by eliminating cross-interference links, thereby enhancing the system's DoF \cite{bafghi2022degrees}. Unlike conventional interference alignment, which requires symbol extension and involves high time complexity, RIS offers a more practical solution for managing interference without the need for any symbol extension. However, too few reflecting elements is insufficient for achieving the desired channel modifications, while too many will increase costs as low-cost RIS is not no-cost. Therefore, determining the optimal number of reflecting elements is crucial for implementing RIS-aided interference management in practical systems.

In this paper, we investigate a practical RIS-aided interference management in cellular networks by asking the following question:
\begin{itemize}	
	\item{Can RIS boost the DoF per cell and how large is it at most?}
	\item{How many reflective elements in RIS are required to achieve the maximum DoF?}
\end{itemize}

Addressing these questions is equivalent to evaluating the solvability of the interference-nulling equation 
\begin{equation}
\left\{ {\begin{aligned}  
  &{{\mathbf{Ax}} + {\mathbf{b}} = 0} \\ 
  &{\left| {{x_i}} \right| = 1,{\text{for each element of }}{\mathbf{x}}} 
\end{aligned}} \right.
\label{I1},
\end{equation}
where ${\mathbf{A}} \in {\mathbb{C}^{L \times N}}$ and ${\mathbf{b}} \in {\mathbb{C}^{L \times 1}}$ are complex random coefficients. Here, ${\mathbf{A}}$ and ${\mathbf{b}}$ can represent the cascaded reflective channel and the direct channel respectively, and ${\mathbf{x}} \in {\mathbb{C}^{N \times 1}}$ can represent the phase shift of RIS. %To solve this equation, 
We need to find a combination of column vectors from ${\mathbf{A}}$ that matches $-{\mathbf{b}}$ exactly in both phase and amplitude. However, the modulus-1 constraints of ${\mathbf{x}}$ make this challenging, as it restricts the columns of ${\mathbf{A}}$ to rotation only, prohibiting any stretching. 
When ${\mathbf{b}} = 0$, the authors of \cite{jiang2022interference} demonstrated that equation \eqref{I1} can be solved if $N$ slightly exceeds $2L$.
This is because the equation consists of $L$ complex equations and $N$ real variables (i.e., $N$ phase shift), requiring at least $2L$ variables for solvability. They also showed by simulation that when the magnitude of ${\mathbf{b}}$ is non-zero and exceeds a certain threshold, more variables are required. But this threshold is still unknown. Additionally, as the magnitude of ${\mathbf{b}}$ increases, this $2L$ criterion gradually diverges from the actual number of required variables.
This is because a sufficiently large ${\mathbf{b}}$ requires ${\mathbf{A}}$ to provide more columns to compensate for the amplitude disparity. Which further increases the difficulty of determining the solvability conditions of this equation. 

Given the randomness of ${\mathbf{A}}$ and ${\mathbf{b}}$, we solve it from a probability perspective. 
Specifically, if the equation \eqref{I1} hold, there must be $\left| {{\mathbf{Ax}}} \right| = \left| {\mathbf{b}} \right|$. Where $\left| {{\mathbf{Ax}}} \right|$ represents the projection of an $N$-dimensional vector ${\mathbf{x}}$ into a lower dimensional space, its magnitude can be characterized by Gordon's Theorem \cite{bandeira2015ten}. This is because, in high-dimensional spaces, the length and direction of a vector can vary significantly. However, when it is randomly projected into a lower-dimensional space, these diversities will be averaged out. As a result, its length tends to be more stable and concentrated around a certain expected value \cite{vershynin2018high}. Gordon's Theorem precisely describes this concentration effect in random projections. Therefore, We first determine the range of $\left| {{\mathbf{Ax}}} \right|$ by Gordon's Theorem and compare it with $\left| {\mathbf{b}} \right|$ to establish the solvability condition. Additionally, the extreme value statistics of the magnitude of the linear combination of ${\mathbf{A}}$'s elements and ${\mathbf{b}}$ are also explored. Furthermore, since the constrained ${\mathbf{x}}$ lies symmetrically on the surface of a ball, the essence of \eqref{I1} can be approximated as identifying the intersection between the sphere and the null-solution space of equations. Consequently, the geometric relationship between these two sets is also employed to establish the solvability conditions.

\subsection{Related Works}
The RIS is a metasurface comprising numerous reconfigurable passive reflecting elements, each capable of imposing an independent phase shift on incoming wireless signals \cite{shao2024target,wang2021joint}. Consequently, signals from different paths can be constructively combined at the desired users to boost signal power, and destructively combined at unintended users to eliminate interference. It has been demonstrated that RIS can significantly enhance both spectral and energy efficiency in various systems. For instance, \cite{pan2020multicell} showed that RIS can improve the cell-edge performance and weighted-sum rate in multi-cell multiple-input multiple-output (MIMO) systems. 
Additionally, \cite{wu2019intelligent} showed that the transmit power of the access point can be decreased in the order of $N^2$ with a sufficiently large number of RIS elements $N$.
This squared power gain is also achievable for the RIS with discrete phase shifts \cite{wu2019beamforming}.

With the surge in terminal devices and diverse service demands, high capacity and massive access have become key features of the next-generation wireless communication systems 
\cite{chen2022standardization,tang2020wireless,matthaiou2021road,ye2022reconfigurable}. 
The capacity limits of a RIS-aided multiple-input single-output (MISO) broadcast channel were initially explored in \cite{chen2023fundamental}. It revealed that with a sufficiently large number of RIS elements, the sum rate achieved by RIS-aided zero-forcing is comparable to that of the more complex dirty paper coding scheme. This implies that the complex transmission scheme can be replaced by a more easily implemented
scheme assisted by RIS. However, the minimum required number of the RIS elements is unclear. 
When numerous users access the network simultaneously, time and spectrum sharing become unavoidable. Consequently, the main challenge in enhancing system performance is effectively managing interference.

Due to the inherent physical nature of the RIS, it can reflect both the impinging signal and the undesired interference from the surrounding environment. While this interference is uncontrollable and the associated channel cannot be estimated \cite{de2021electromagnetic}. 
The authors of \cite{khaleel2023electromagnetic} proposed an electromagnetic interference (EMI) cancellation scheme to mitigate this interference by exploiting the time-domain behavior of the EMI and optimizing the RIS phase shifts across different time slots.

Compared to the interference from the environment, inter-user interference has a greater impact on the system performance, particularly in multi-user systems.
To eliminate the inter-user interference more efficiently, the authors of \cite{fu2021reconfigurable} proposed a RIS-assisted interference alignment strategy. Specifically, the RIS phase and precoding matrix were jointly optimized to improve the feasibility of interference alignment conditions.
Subsequently,  an enhanced interference alignment strategy with a minimum interference leakage (MIL) criterion for the RIS-assisted multiuser MIMO system was proposed in \cite{xu2023enhanced}. The MIL criterion was used to optimize the precoding matrix iteratively. This strategy enabled a trade-off between the capacity and the computational complexity.

Although interference alignment has been proven to be a useful interference management, it is difficult to implement in actual systems. This is because the primary operation of interference alignment is designing a precoding matrix to minimize the interference subspace, thereby maximizing the available space for useful signals  \cite{cadambe2008interference,cadambe2009interference,5502363}. However, this usually requires symbol extension, resulting in high time complexity.

Recent studies have demonstrated that, in the $k$-user interference channel, interference can be completely eliminated by RIS \cite{bafghi2022degrees,chae2022cooperative,jiang2022interference}. Consequently, a full DoF of $k$ can be achieved. Compared with the conventional interference alignment, which can only recover half of the total $k$ DoF \cite{cadambe2008interference}, the RIS can not only significantly increase the system's DoF but also does not require any symbol expansion. Therefore, the RIS-assisted interference nulling strategy is more practical.

The time-selective $k$-user interference channel assisted by RIS was first studied from a DoF perspective in \cite{bafghi2022degrees}. The authors showed that any DoF less than $k$ can be achieved with a sufficiently large number of RIS elements. Specifically, the full DoF $k$ can be achieved if the number of active RIS elements exceeds $k\left( {k - 1} \right)$. Active RIS means it can modify both phase and amplitude.
For the passive RIS, which can only adjust phase, infinite elements are required to achieve this full DoF. 
A tighter boundary $2k\left( {k - 1} \right)$ for passive RIS elements was provided in \cite{jiang2022interference}. 
However, this boundary will become increasingly loose as the strength of the direct link increases. Meanwhile, the authors of \cite{chae2022cooperative} extended this single-input single-output (SISO) network to the $k$-user MIMO network and showed that the sum rate and the sum DoF can be significantly improved if the number of active RIS elements is large enough.

\subsection{Main Contribution}
Although there have been many studies on RIS, the required number of RIS elements for the desired system performance is rarely studied. However, the system performance is significantly influenced by the number of RIS elements, as it relies on the phase adjustments of each element. This paper aims to determine the required number of RIS elements for the achievable DoF of the cellular networks, thereby providing theoretical guidance for the practical system design. The main contributions of this paper are summarized as follows: 
\begin{itemize}	
\item{We first analyze the maximum DoF through the feasibility of interference nulling equations and determine that, in the practical $G$-cell system with more single-antenna users $K$ than BS antennas $M$ per cell, the assistance of RIS is indispensable to achieve the full DoF $\min \left\{ {M,K} \right\}=M$ of each cell.
}

\item{To determine the required number of RIS elements $N$ for this full DoF,
we first propose a modified Gordon's Theorem to derive the necessary condition of $N$. Then the approximate sufficient condition of $N$ is derived by the geometric relationship between the interference null equations and the modulus-1 constraints. We prove that both of them have the same form of threshold, i.e., 
${O\left( {\sqrt L \eta } \right)}$ for ${\eta  \ge O\left( {\sqrt L } \right)}$, and ${O\left( L \right)}$ for else,
where $L{\text{ = }}GM\left( {GM - 1} \right)$ and $\eta$ is the strength ratio of the direct link to the cascaded reflective link.
}

\item{Moreover, we use the extreme value statistics to derive a more precise necessary condition. Specifically, for large $\eta $, the threshold is $L + \left( {\sqrt L  + c} \right)\eta $ ($c$ is a constant). For small $\eta $, it is the same as when $\eta=0 $, i.e., slightly larger than ${2L}$. And we show that the phase transition point of $\eta $ is approximately $\sqrt L $.}
		
\item{To verify the above conclusions, we propose a sum rate maximization problem and employ a Riemannian conjugate gradient method (RCG) to solve it. The simulation results show that our conclusions are highly consistent with the actual performances. }
\end{itemize}

\subsection{Paper Organization and Notations}
The rest of the paper is organized as follows. Section II introduces the RIS-aided multi-cell system model. Section III analyzes the available DoF of each cell. In Section IV, we derive the order of the required reflective elements for full-DoF. Section V presents a more precise condition. In Section VI, we propose a sum rate maximization problem. Simulation results are shown in Section VII. Finally, Section VIII concludes the paper.

\emph{Notations:} The scalar, vector and matrix are denoted by lowercase letter ($x$), boldface lowercase letter (${\bf{x}}$) and boldface uppercase letter (${\bf{X}}$), respectively. We use ${\left(  \cdot  \right)^*}$, ${\left(  \cdot  \right)^T}$ and ${\left(  \cdot  \right)^H}$ to denote the conjugate, transpose and conjugate transpose respectively.  ${\left(  \cdot  \right)^ + }$ , $\left|  \cdot  \right|$ represent  the pseudo-inverse and Frobenius norm respectively. The real part and image part of a complex number 
are denoted by ${\rm{re}}\left(  \cdot  \right)$ and ${\rm{im}}\left(  \cdot  \right)$ respectively. $E\left(  \cdot  \right)$ and ${\mathop{\rm var}} \left(  \cdot  \right)$ represent the expectation value and variance, respectively.
We use ${\text{ve}}{{\text{c}}_k}\left( {\mathbf{X}} \right)$ to denote the $k$-th column of ${\bf{X}}$.

\section{System Model}
\label{System_Model}
As shown in Fig. \ref{fig1}, the RIS-aided multi-cell network consists of $G$ cells and one RIS with $N$ passive reflective elements. Each cell is equipped with a BS that has $M$ antennas and supports $K$ single-antenna users. Users can transmit their messages to the BS via both direct and reflecting channels facilitated by the RIS. Since all transmission links share the same time and frequency resources, each BS receives not only the desired signals from its local users but also interference from users in other cells.

We denote ${{\mathbf{H}}_i} \in {\mathbb{C}^{N \times M}}$ as the reflecting channel between the RIS and the BS of cell $i$, ${\mathbf{h}}_{Ik}^{[i]} \in {\mathbb{C}^{N \times 1}}$ as the reflecting channel between the RIS and the $k$-th user in cell $i$, and ${\mathbf{h}}_{{B_i}k}^{[j]} \in {\mathbb{C}^{M \times 1}}$ as the direct channel between the BS of cell $i$ and the $k$-th user in cell $j$. The reflection coefficients of the RIS are represented by a diagonal matrix  ${\mathbf{R}} = diag\left( {{e^{j{\theta _1}}}, \cdots,{e^{j{\theta _N}}}} \right) \in {\mathbb{C}^{N \times N}}$, where $0 \le {\theta _n} \le 2\pi,\forall n \in 1, \cdots, N$ is the phase shift of each passive element of the RIS.
Therefore, the received signal at the BS of cell $i$ can be expressed as:

\begin{equation}
\begin{aligned}
{{\bf{y}}_i} &= \sum\limits_{g = 1}^G {\sum\limits_{k = 1}^K {\left( {{\bf{H}}_i^H{\bf{Rh}}_{Ik}^{[g]} + {\mathbf{h}}_{{B_i}k}^{[g]}} \right)x_k^{[g]}} }  + {{\bf{n}}_i}\\
&= \sum\limits_{g = 1}^G {\sum\limits_{k = 1}^K {\left( {{\bf{H}}_i^Hdiag\left( {{\bf{h}}_{Ik}^{[g]}} \right){\bf{v}} + {\mathbf{h}}_{{B_i}k}^{[g]}} \right)x_k^{[g]} + {{\bf{n}}_i}} } \\
&= \sum\limits_{g = 1}^G {\sum\limits_{k = 1}^K {\left( {{{\left( {{\bf{A}}_{ik}^{[g]}} \right)}^H}{\bf{v}} + {\mathbf{h}}_{{B_i}k}^{[g]}} \right)x_k^{[g]} + {{\bf{n}}_i}} } 
\end{aligned},
\label{eq1}
\end{equation}
where 
\begin{equation}
\begin{array}{l}
{\left( {{\mathbf{A}}_{ik}^{[g]}} \right)^H} = {\mathbf{H}}_i^Hdiag\left( {{\mathbf{h}}_{Ik}^{[g]}} \right) \in {\mathbb{C}^{M \times N}}
\end{array}
\label{eq2}
\end{equation}
denotes the cascaded reflecting channel from the $k$-th user of cell $g$ to the BS of cell $i$, ${\mathbf{v}} = {\left[ {{e^{j{\theta _1}}}, \cdots,{e^{j{\theta _N}}}} \right]^T} \in {\mathbb{C}^{N \times 1}}$ is the phase shift vector, and ${{\mathbf{n}}_i} \in {\mathbb{C}^{M \times 1}} \sim \mathcal{C}\mathcal{N}\left( {0,\sigma^2 {I_M}} \right)$ is the additive white Gaussian noise (AWGN) at the BS of cell $i$. $x_k^{[g]} \in {\mathbb{C}^{1 \times 1}}$ corresponds to the signal transmitted by the $k$-th user of cell $g$ with the transmit power of $p_k^{[g]}$, i.e., $E\left( {{{\left| {x_k^{[g]}} \right|}^2}} \right) = p_k^{[g]}$.

\begin{figure}%[H] % figure1
	\centering\includegraphics[width=7.5cm]{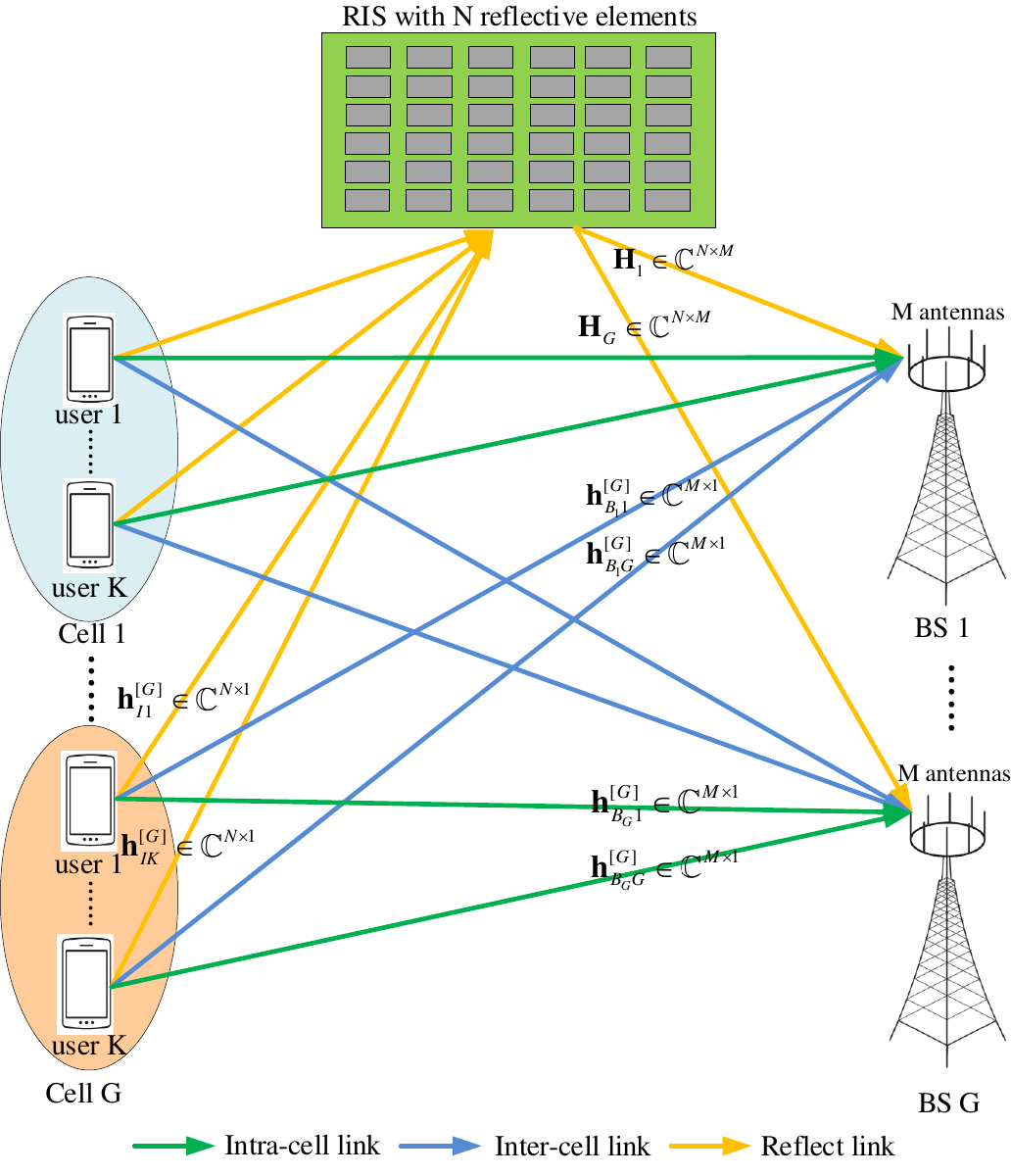}
	\caption{The RIS-aided multi-cell system}
	\label{fig1}
\end{figure}

Moreover, it has been demonstrated that the required reflective elements to eliminate interference are independent of the channel model \cite{jiang2022interference}. Therefore, without loss of generality, we assume that all channels suffer from Rayleigh fading.
Although path losses vary across different cells, the key factor affecting the required RIS elements $N$ is the significant intensity disparity between direct and cascaded reflective links. To explore this effect, we first assume uniform path losses for all users.
Specifically, each element of  ${{\mathbf{H}}_i} \in {\mathbb{C}^{N \times M}}$ is i.i.d.$ \sim \mathcal{C}\mathcal{N}\left( {0,\sigma _1^2} \right)$,
${\bf{h}}_{Ik}^{[i]} \sim {\cal C}{\cal N}\left( {0,\sigma _2^2{{\bf{I}}_N}} \right)$, and ${\mathbf{h}}_{{B_i}k}^{[j]} \sim {\cal C}{\cal N}\left( {0,\sigma _3^2{{\bf{I}}_M}} \right)$.
If $N$ and $L$ are large enough, we approximate ${\mathbf{A}} \in {\mathbb{C}^{N \times L}}$ as Gaussian distribution, and the variance of each element is $\sigma _4^2$ (where ${\sigma _4} = {\sigma _1}{\sigma _2}$).
The scenario with different path losses for each user is simulated in Section \ref{different_path}, and the results are consistent with those obtained from the previous assumptions.

In this paper, we also assume that the RIS is configured by a centralized controller, and the channel state information (CSI) of all links is perfectly known at the centralized controller and the BS \cite{chen2023channel,huang2023integrating,guo2023joint}.

\section{Achievable DoF of Each Cell}
To achieve the full DoF or the interference-free DoF of each cell, interference must be completely eliminated. In this section, we analyze the achievable DoF of each cell through the feasibility of interference nulling equations.

In the practical $G$-cell system, due to the size and cost limitations, the number of users $K$ in each cell is usually larger than the number of antennas $M$ at each BS, i.e., $K \ge M$. 
\begin{theorem}
If ${\rm{M}} < KG$, the full DoF $\min \left\{ {M,K} \right\}$ \cite{cover1999elements} of each cell cannot be achieved by antenna collaboration alone, and the assistance of RIS is indispensable.
\label{the1}
\end{theorem}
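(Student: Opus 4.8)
The plan is to show that, once the RIS is switched off, the only spatial resource left is receive combining at each BS --- the users are single-antenna and cannot pre-steer their signals --- and that this resource is provably too small to reject the inter-cell interference while still delivering $\min\{M,K\}$ interference-free streams per cell. Concretely, with the RIS removed the effective channel from user $k$ of cell $g$ to BS $i$ collapses to the direct channel ${\mathbf{h}}_{{B_i}k}^{[g]} \in \mathbb{C}^{M}$, so to decode $d_i$ interference-free streams cell $i$ must supply $d_i$ linearly independent combining vectors ${\mathbf{w}}_1,\dots,{\mathbf{w}}_{d_i} \in \mathbb{C}^{M}$ satisfying ${\mathbf{w}}_m^{H}{\mathbf{h}}_{{B_i}k}^{[g]}=0$ for every out-of-cell user $(g,k)$ with $g\neq i$. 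First I would recast the achievable per-cell DoF as the maximal number of such vectors that simultaneously preserve the desired signal, thereby reducing the DoF question to a dimension count in $\mathbb{C}^{M}$.

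Next I would carry out that dimension count. The inter-cell interference seen by BS $i$ lives in $\mathcal{I}_i=\mathrm{span}\{{\mathbf{h}}_{{B_i}k}^{[g]}:g\neq i,\ 1\le k\le K\}$, a span of $(G-1)K$ vectors. Since every channel is drawn from a continuous (Rayleigh) law, these vectors are in general position, so $\dim\mathcal{I}_i=\min\{(G-1)K,\,M\}$ almost surely. The admissible combiners must lie in the orthogonal complement $\mathcal{I}_i^{\perp}$, of dimension $\max\{0,\,M-(G-1)K\}$; projecting the $K$ desired channels into this complement leaves useful rank at most $\min\{K,\,\max\{0,M-(G-1)K\}\}$. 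Hence the per-cell interference-free DoF attainable by antenna collaboration alone obeys
\begin{equation}
d_i \le \min\Big\{K,\ \max\{0,\, M-(G-1)K\}\Big\}.
\label{eq:dofbound}
\end{equation}

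Finally I would compare \eqref{eq:dofbound} with the interference-free per-cell DoF $\min\{M,K\}$ of \cite{cover1999elements}. In a genuine multi-cell system ($G\ge 2$) the out-of-cell user set is non-empty, so $(G-1)K\ge 1$. When $K>M$ one has $(G-1)K\ge K>M$, forcing $\max\{0,M-(G-1)K\}=0<M$; when $K\le M$, the hypothesis $M<KG$ gives $M-(G-1)K<K$, so the right-hand side of \eqref{eq:dofbound} is strictly below $K$. In either case \eqref{eq:dofbound} falls strictly short of $\min\{M,K\}$, i.e. the full DoF is unreachable with the direct links alone, which is exactly the assertion that the RIS is indispensable; the complementary claim that the RIS restores this DoF is deferred to Sections~IV--V.

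The step I expect to be the main obstacle is not the counting but the converse justification that receive zero-forcing is DoF-optimal among \emph{all} antenna-collaboration schemes, so that no nonlinear detector, successive-cancellation scheme, or joint processing within a cell can beat \eqref{eq:dofbound} without the RIS. I would settle this with a high-SNR argument: because single-antenna transmitters cannot shape their transmit direction and $\mathcal{I}_i$ generically fills (most of) $\mathbb{C}^{M}$, any interference component left un-nulled keeps the corresponding SINR bounded and hence contributes zero to the pre-log, so $\dim\mathcal{I}_i^{\perp}$ caps the pre-log itself rather than merely one linear scheme. The only remaining point needing care is the almost-sure genericity of $\dim\mathcal{I}_i$, which follows from the absolute continuity of the Rayleigh channel law.
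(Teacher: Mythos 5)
Your proposal is correct and follows essentially the same route as the paper's Appendix~A: both arguments reduce the question to a rank/dimension count on the receive-combining matrix, using the almost-sure general position of the random direct channels to show that nulling the $(G-1)K$ out-of-cell channels leaves only a $\max\{0,\,M-(G-1)K\}$-dimensional space for the desired streams, which is strictly short of $\min\{M,K\}$ whenever $M<KG$ (the paper packages this as $M=K(G-1)+\mathrm{rank}(\mathbf{U})\ge KG$). Your closing concern about optimality beyond linear zero-forcing receivers is a point the paper does not address either --- its proof is likewise confined to linear decoding matrices $\mathbf{U}_i$ --- so no further work is needed to match the paper's level of rigor.
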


\begin{proof}
Please see Appendix \ref{appA}. 
\end{proof}

Given that $M$ is much less than $KG$ in the practical multi-cell system, the contribution of antenna collaboration to achieving the full DoF is negligible compared to that of the RIS. Therefore, this paper focuses solely on enhancing the DoF through the RIS. 

Since the full DoF of each cell is $\min \left\{ {M,K} \right\}$, we can assume without loss of DoF that $M = K$. In other words, if the number of users $\tilde K$ in each cell is larger than the number of antennas $M$ at each BS, we can randomly set $\tilde K - M$ user's transmit power to zero in each time slot \cite{cadambe2008interference,cadambe2009interference,5502363}. The users' messages are then decoded from distinct rows of the received signal at the BS. Specifically, the message of the $k$-th user in cell $i$ is decoded from the $k$-th row of ${{{\bf{ y}}}_i}$ as
\begin{equation}
\begin{aligned}
{{{\bf{ y}}}_{ik}}{\rm{ = }}&\left( {\left( {{\bf{A}}_{ik}^{[i]}} \right)_k^H{\bf{v}} + {{\left( {{\mathbf{h}}_{{B_i},k}^{[i]}} \right)}_k}} \right)x_k^{[i]}\\
&+ \underbrace {\sum\limits_{j \ne k}^K {\left( {\left( {{\bf{A}}_{ij}^{[i]}} \right)_k^H{\bf{v}} + {{\left( {{\mathbf{h}}_{{B_i},j}^{[i]}} \right)}_k}} \right)x_j^{[i]}} }_{{\rm{\text{intra-cell interference}}}}\\
&+ \underbrace {\sum\limits_{g \ne i}^G {\sum\limits_{j = 1}^K {\left( {\left( {{\bf{A}}_{ij}^{[g]}} \right)_k^H{\bf{v}} + {{\left( {{\mathbf{h}}_{{B_i},j}^{[g]}} \right)}_k}} \right)x_j^{[g]}} } }_{{\rm{\text{inter-cell interference}}}} + {{\bf{n}}_{ik}}
\end{aligned},
\label{eq22} 
\end{equation}
where $\left( {{\bf{A}}_{ij}^{[g]}} \right)_k^H$ and ${\left( {{\mathbf{h}}_{{B_i},j}^{[g]}} \right)_k}$ are the $k$-th row of ${\left( {{\bf{A}}_{ij}^{[g]}} \right)^H}$ and ${{\mathbf{h}}_{{B_i},j}^{[g]}}$, respectively.

To achieve the full-DoF, the following equations must hold simultaneously for all $i,{\rm{g}} \in \left[ {1,G} \right]$ and $k \in \left[ {1,K} \right]$.
\begin{equation}
\left\{ {\begin{aligned}
	&{\left( {{\bf{A}}_{ik}^{[i]}} \right)_k^H{\bf{v}} + {{\left( {{\mathbf{h}}_{{B_i},k}^{[i]}} \right)}_k} \ne 0}\\
	&{\left( {{\bf{A}}_{ij}^{[i]}} \right)_k^H{\bf{v}} + {{\left( {{\mathbf{h}}_{{B_i},j}^{[i]}} \right)}_k} = 0,{\rm{ }}j \ne k}\\
	&{\left( {{\bf{A}}_{ij}^{[g]}} \right)_k^H{\bf{v}} + {{\left( {{\mathbf{h}}_{{B_i},j}^{[g]}} \right)}_k} = 0,{\rm{ g}} \ne i,j \in \left[ {1,K} \right]}
	\end{aligned}} \right.
\label{eq23}  %(8)
\end{equation}
It is possible due to the randomness of channel \cite{jiang2022interference}. Therefore, the conditions for the full-DoF can be transformed into a feasible problem with $L{\text{ = }}GM\left( {GM - 1} \right)$ zero equations as

\begin{equation}
\begin{gathered}
{\text{find }}{\mathbf{v}} \hfill \\
{\text{s}}{\text{.t}}{\text{. }}{\mathbf{v}} \in {S_1} \cap {S_2} \hfill \\ 
\end{gathered}
\label{eq25}  %(10)
\end{equation}
where
\begin{equation}
\left\{ {\begin{aligned}
	&{{S_1}:{{\mathbf{A}}^H}{\mathbf{v}}{\text{ + }}{\mathbf{b}} = 0} \\ 
	&{{S_2}:\left| {{v_i}} \right| = 1,i \in \left[ {1,N} \right]} 
	\end{aligned}} \right.,
\label{eq26}  %(11)
\end{equation}
and
\begin{equation}
\begin{array}{l}
{\mathbf{A}} = \left[ {{\text{ve}}{{\text{c}}_k}\left( {{\mathbf{A}}_{ij}^{[i]}} \right), \cdots ,{\text{ve}}{{\text{c}}_k}\left( {{\mathbf{A}}_{ij}^{[g]}} \right)} \right] \in {\mathbb{C}^{N \times L}},
\end{array}  %(12)
%\nonumber
\end{equation}
\begin{equation}
\begin{array}{l}
{\mathbf{b}} = {\left[ {{{\left( {{\mathbf{h}}_{{B_i},j}^{[i]}} \right)}_k}, \cdots ,{{\left( {{\mathbf{h}}_{{B_i},j}^{[g]}} \right)}_k}} \right]^T} \in {\mathbb{C}^{L \times 1}}.
\end{array}
%\nonumber  %(13)
\end{equation}
Problem \eqref{eq25} can be solved if the number of RIS elements $N$ is large enough \cite{jiang2022interference}. But excessive $N$ will increase the implementation costs and insufficient $N$ cannot guarantee the solvability of \eqref{eq25}. 
The required $N$ for the solvability of \eqref{eq25} is discussed in the next two sections.

\section{the Order of $N$ for the Full-DoF}
In high-dimensional space, although ${\mathbf{v}} \in {\mathbb{C}^{N \times 1}}$ has complex geometric properties such as magnitude and angle, these complexities will be averaged out when it is randomly projected onto a lower-dimensional space. As a result, the magnitude of $\left| {{{\mathbf{A}}^H}{\mathbf{v}}} \right|$ tends to be more stable and concentrated around a narrow range, which is known as the norm concentration phenomenon in high-dimensional statistical \cite{vershynin2018high}. If $\left| {\mathbf{b}} \right|$ does not lie within this range, then no matter how ${\mathbf{v}}$ is adjusted, the equation of \eqref{eq26} will almost %certainly 
have no solution. Therefore, in this section, we derive the necessary condition of $N$ for full-DoF based on this norm concentration phenomenon.

Furthermore, ${S_2}$ is a symmetric subset of the sphere concerning the center, and ${S_1}$ represents the solution space of the equation. If the sphere's radius exceeds the distance from its center to ${S_1}$, the sphere and ${S_1}$ will intersect.
Therefore, this geometric relationship between the two sets is also utilized in this section to derive the approximate sufficient condition of $N$ for achieving the full-DoF.

\subsection{the Necessary Condition}
\label{Necessary_A}
In this subsection, we propose a modified Gordon's Theorem to derive the necessary $N$ for full DoF based on the norm concentration phenomenon in high dimensional probability.

Specifically, for ${{\bf{v}}_{{S_1}}} = \left\{ {\left. {\bf{v}} \right|{\bf{v}} \in {S_1}} \right\}$, we must have $\left| {{{\bf{A}}^H}{{\bf{v}}_{{S_1}}} + {\bf{b}}} \right| = 0$. If $\min \left| {{{\bf{A}}^H}{{\mathbf{v}}_{{S_2}}}{\text{ + }}{\mathbf{b}}} \right| > 0$ for all ${{\bf{v}}_{{S_2}}} = \left\{ {\left. {\bf{v}} \right|{\bf{v}} \in {S_2}} \right\}$, then there is no intersection between ${{S_1}}$ and ${{S_2}}$, i.e., ${S_1} \cap {S_2} = \phi $.
And because
\begin{equation}
\left| {\left| {{{\bf{A}}^H}{{\bf{v}}_{{S_2}}}} \right| - \left| {\bf{b}} \right|} \right| \le \left| {{{\bf{A}}^H}{{\bf{v}}_{{S_2}}}{\rm{ + }}{\bf{b}}} \right| \le \left| {{{\bf{A}}^H}{{\bf{v}}_{{S_2}}}} \right| + \left| {\bf{b}} \right|
\label{eq28},    %(15)
\end{equation}
$\min \left| {{{\bf{A}}^H}{{\mathbf{v}}_{{S_2}}}{\text{ + }}{\mathbf{b}}} \right| > 0$ is equivalent to
$\min \left| {{{\bf{A}}^H}{{\mathbf{v}}_{{S_2}}}} \right| > \left| {\mathbf{b}} \right|$ or $\max \left| {{{\bf{A}}^H}{{\mathbf{v}}_{{S_2}}}} \right| < \left| {\mathbf{b}} \right|$.

According to the norm concentration phenomenon in high dimension, $\left| {{{\bf{A}}^H}{{\bf{v}}_{{S_2}}}} \right|$ is concentrated around a certain expected value within a narrow range. Gordon's Theorem \cite{bandeira2015ten} indicates that if ${{\bf{A}}^H}$ is a random matrix with independent $\mathcal{N}\left( {0,1} \right)$ entries and ${{\bf{v}}_{{S_2}}}$ is a unit vector, then $\left| {{{\bf{A}}^H}{{\bf{v}}_{{S_2}}}} \right|$ approximately has the same expected value as $\left| {\mathbf{g}} \right|$, where ${\mathbf{g}}$ is a $L$-dimensional Gaussian random vector.
Moreover, due to the diversity of ${{\bf{v}}_{{S_2}}}$, their specific projection lengths will fluctuate around this expected value.
This fluctuation can be quantified by the Gaussian width $W\left( S \right)$ \cite{vershynin2018high} of the set $S$ composed of ${{\bf{v}}_{{S_2}}}$, as the Gaussian width $W\left( S \right)$ represents the expected maximum projection length of these vectors of $S$ in a Gaussian random direction.
However, in Section \ref{System_Model}, the entries of ${{\bf{A}}^H}$ and ${\bf{b}}$ are assumed to independently follow $\mathcal{C}\mathcal{N}\left( {0,{\sigma _4^2}} \right)$ and $\mathcal{C}\mathcal{N}\left( {0,{\sigma _3^2}} \right)$, respectively. ${{\bf{v}}_{{S_2}}}$ is also not an unit vector. Thus, we proposed a modified Gordon's Theorem as follows to determine the value of $\left| {{{\bf{A}}^H}{{\bf{v}}_{{S_2}}}} \right|$.

\begin{theorem}
Let ${\mathbf{G}} \in {\mathbb{C}^{L \times N}}$ be a random matrix with independent $\mathcal{C}\mathcal{N}\left( {0,{\sigma ^2}} \right)$ elements and the set ${S} = \left\{ {\left. {\bf{x}} \right|\left| {{x_i}} \right| = 1,i \in \left[ {1,N} \right]} \right\}$. Then
\begin{equation}
\left\{ {\begin{array}{*{20}{c}}
	{E\left( {\mathop {\min }\limits_{{\bf{x}} \in S} \left| {{\bf{Gx}}} \right|} \right) \ge \sigma \sqrt N \left( {\sqrt L  - \frac{{\sqrt \pi  }}{2}\sqrt N } \right)}\\
	{E\left( {\mathop {\max }\limits_{{\bf{x}} \in S} \left| {{\bf{Gx}}} \right|} \right) \le \sigma \sqrt N \left( {\sqrt L  + \frac{{\sqrt \pi  }}{2}\sqrt N } \right)}
	\end{array}} \right.
\label{eq29}   %(16)
\end{equation}
\label{the2}
\end{theorem}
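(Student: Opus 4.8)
The plan is to read \eqref{eq29} as a scaled instance of Gordon's comparison theorem \cite{bandeira2015ten}, so that both bounds reduce to identifying a concentration ``center'' and a Gaussian-width ``fluctuation.'' First I would strip the two features that separate the statement from the textbook version. Every $\mathbf{x}\in S$ satisfies $|\mathbf{x}|=\sqrt N$, and $\mathbf G=\sigma\mathbf G'$ with $\mathbf G'$ having $\mathcal{CN}(0,1)$ entries, so writing $\mathbf x=\sqrt N\,\hat{\mathbf x}$ gives $|\mathbf{Gx}|=\sigma\sqrt N\,|\mathbf G'\hat{\mathbf x}|$, with $\hat{\mathbf x}$ ranging over the rescaled torus $\hat S=\{\hat{\mathbf x}:|\hat x_i|=1/\sqrt N\}$, a subset of the unit sphere of $\mathbb C^N$. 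It then suffices to prove $E|\mathbf g|-W(\hat S)\le E\min_{\hat{\mathbf x}\in\hat S}|\mathbf G'\hat{\mathbf x}|\le E\max_{\hat{\mathbf x}\in\hat S}|\mathbf G'\hat{\mathbf x}|\le E|\mathbf g|+W(\hat S)$ and to evaluate the two sides, after which multiplying by $\sigma\sqrt N$ recovers \eqref{eq29}.

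The two ingredients are then elementary. For the center, fix a unit $\hat{\mathbf x}$: circular symmetry of the complex Gaussian gives $\mathbf G'\hat{\mathbf x}\sim\mathcal{CN}(0,\mathbf I_L)$, so $|\mathbf G'\hat{\mathbf x}|\stackrel{d}{=}|\mathbf g|$ with $E|\mathbf g|^2=L$ and $E|\mathbf g|=\Gamma(L+\tfrac12)/\Gamma(L)$, which is $\le\sqrt L$ and tends to $\sqrt L$; this is the $\sqrt L$ appearing in the statement (the $O(1/\sqrt L)$ gap in the lower bound being absorbed into the $\sqrt L$ approximation). For the fluctuation I would compute the Gaussian width $W(\hat S)=E\sup_{\hat{\mathbf x}\in\hat S}\mathrm{re}\langle\mathbf h,\hat{\mathbf x}\rangle$ with $\mathbf h\sim\mathcal{CN}(0,\mathbf I_N)$; since the coordinates decouple and each lies on a circle of radius $1/\sqrt N$, the supremum is $\tfrac1{\sqrt N}\sum_{i=1}^N|h_i|$, and as $|h_i|$ is Rayleigh with $E|h_i|=\tfrac{\sqrt\pi}{2}$ we obtain $W(\hat S)=\sqrt N\cdot\tfrac{\sqrt\pi}{2}=\tfrac{\sqrt\pi}{2}\sqrt N$, exactly the second term.

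The main obstacle is that Gordon's theorem is classically stated for real matrices with i.i.d.\ entries, whereas here $\mathbf G'$ is complex and $S$ is a torus rather than a sphere. The $2L\times 2N$ real representation of $\hat{\mathbf x}\mapsto\mathbf G'\hat{\mathbf x}$ has the block form $\bigl(\begin{smallmatrix}A_R&-A_I\\A_I&A_R\end{smallmatrix}\bigr)$ and is \emph{not} i.i.d., and the naive choice of the additive comparison process $\mathrm{re}\langle\mathbf g,\mathbf w\rangle+\mathrm{re}\langle\mathbf h,\hat{\mathbf x}\rangle$ fails the Slepian--Gordon increment inequalities precisely because of the imaginary-part cross terms. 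The route I would take to make the comparison legitimate is the identity $|\mathbf G'\hat{\mathbf x}|^2=\tfrac12\bigl(|\mathbf C\hat{\mathbf x}^{(1)}|^2+|\mathbf C\hat{\mathbf x}^{(2)}|^2\bigr)$ for an honest real i.i.d.\ Gaussian $\mathbf C\in\mathbb R^{L\times 2N}$ and an orthonormal pair $\hat{\mathbf x}^{(1)},\hat{\mathbf x}^{(2)}\in\mathbb R^{2N}$ built by splitting real and imaginary parts of $\hat{\mathbf x}$. Writing $|\mathbf G'\hat{\mathbf x}|=\tfrac1{\sqrt2}\max_{|\mathbf W|=1}\mathrm{tr}(\mathbf W^T\mathbf C\mathbf V_{\hat{\mathbf x}})$ with $\mathbf V_{\hat{\mathbf x}}=[\hat{\mathbf x}^{(1)},\hat{\mathbf x}^{(2)}]$ turns the quantity into a bona fide bilinear Gaussian form in the genuinely i.i.d.\ matrix $\mathbf C$, to which the real Gordon/Chevet min--max comparison applies; the prefactor $\tfrac1{\sqrt2}$ together with the doubled dimension then reproduces the constants $\sqrt L$ and $\tfrac{\sqrt\pi}{2}\sqrt N$ obtained above, completing the proof.
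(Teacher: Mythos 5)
Your proposal follows essentially the same route as the paper's Appendix~B: rescale ${\bf Gx}=\sigma\sqrt N\,{\bf \tilde G\tilde x}$ so that ${\bf \tilde x}$ lives on the torus $\{|\tilde x_i|=1/\sqrt N\}$ inside the unit sphere, invoke Gordon's theorem with center $\sqrt L$, and compute the Gaussian width of that torus as $\frac{\sqrt\pi}{2}\sqrt N$ via the coordinatewise supremum $\frac{1}{\sqrt N}\sum_i|g_i|$. The only difference is that you additionally flag and sketch a real-representation argument to justify applying the (classically real) Gordon comparison to a complex Gaussian matrix --- a gap the paper silently passes over by citing the reference directly --- which is a welcome refinement but not a different proof.
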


\begin{proof}
Please see Appendix \ref{appB}.
\end{proof}

\begin{theorem}	
If the random vector ${\mathbf{x}} \sim \mathcal{C}\mathcal{N}\left( {0,{\sigma ^2}{{\mathbf{I}}_L}} \right)$, then the expectation value of its norm is $E\left( {\left| {\mathbf{x}} \right|} \right) = \sigma \sqrt L $ \cite{vershynin2018high}.
\label{the3}
\end{theorem}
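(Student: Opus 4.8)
The plan is to reduce the norm of the complex Gaussian vector to a sum of real squared Gaussians and then combine an exact second-moment computation with a concentration argument. First I would write each entry as $x_i = a_i + j b_i$, where $a_i, b_i$ are i.i.d. real $\mathcal{N}\left(0, \sigma^2/2\right)$, so that $\left| {\mathbf{x}} \right|^2 = \sum_{i=1}^L \left( a_i^2 + b_i^2 \right)$ is, up to the scale $\sigma^2/2$, a chi-squared variable with $2L$ degrees of freedom. The second moment is then immediate: by linearity, $E\left( \left| {\mathbf{x}} \right|^2 \right) = \sum_{i=1}^L E\left( \left| x_i \right|^2 \right) = L\sigma^2$, which identifies the target value $\sigma\sqrt{L}$ as $\sqrt{E\left(\left| {\mathbf{x}} \right|^2\right)}$.

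The remaining task is to pass from the second moment to the first moment, i.e. to show $E\left(\left| {\mathbf{x}} \right|\right) \approx \sqrt{E\left(\left| {\mathbf{x}} \right|^2\right)}$. I would do this through the identity $\left( E\left| {\mathbf{x}} \right| \right)^2 = E\left(\left| {\mathbf{x}} \right|^2\right) - \mathrm{var}\left(\left| {\mathbf{x}} \right|\right)$ together with a bound on the fluctuation term. Since $\left| {\mathbf{x}} \right| = \left(\sigma/\sqrt{2}\right)\left\| {\mathbf{z}} \right\|$ for a standard real Gaussian vector ${\mathbf{z}}$ of dimension $2L$, and the Euclidean norm is $1$-Lipschitz, Gaussian concentration of measure gives $\mathrm{var}\left(\left| {\mathbf{x}} \right|\right) \le \sigma^2/2$, a bound that is independent of $L$. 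Substituting, $E\left(\left| {\mathbf{x}} \right|\right) = \sqrt{L\sigma^2 - O\left(\sigma^2\right)} = \sigma\sqrt{L}\left( 1 - O\left(1/L\right) \right)$, which converges to $\sigma\sqrt{L}$ in the high-dimensional regime $L \gg 1$ of interest here.

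Alternatively, one can reach the same conclusion through special functions: the mean of a chi distribution with $2L$ degrees of freedom is $\sqrt{2}\,\Gamma\left(L+1/2\right)/\Gamma\left(L\right)$, so $E\left(\left| {\mathbf{x}} \right|\right) = \sigma\,\Gamma\left(L+1/2\right)/\Gamma\left(L\right)$, and Stirling's asymptotic for the Gamma ratio yields $\Gamma\left(L+1/2\right)/\Gamma\left(L\right) = \sqrt{L}\left(1 - 1/(8L) + O\left(1/L^2\right)\right)$. I expect the main obstacle to be conceptual rather than computational: the stated equality is exact only in the asymptotic limit, so the real content is to certify that the gap between $E\left(\left| {\mathbf{x}} \right|\right)$ and $\sigma\sqrt{L}$ is of strictly lower order. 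The concentration route makes this explicit and dimension-free, and is also the most consistent with how the result is used downstream, where $\left| {\mathbf{b}} \right|$, the norm of a $\mathcal{C}\mathcal{N}\left(0,\sigma_3^2 {\mathbf{I}}_L\right)$ vector, is compared against the Gordon-type bounds of Theorem \ref{the2}.
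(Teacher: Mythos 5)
Your argument is correct, and it is essentially the standard concentration-of-the-norm derivation from the cited reference (Vershynin): the paper itself offers no proof of Theorem~\ref{the3}, delegating it entirely to that citation, so your write-up simply supplies the details the paper omits. You also correctly flag the one point worth making explicit: the stated equality $E\left(\left|\mathbf{x}\right|\right)=\sigma\sqrt{L}$ is exact only asymptotically (the exact value being $\sigma\,\Gamma\left(L+1/2\right)/\Gamma\left(L\right)=\sigma\sqrt{L}\left(1-O\left(1/L\right)\right)$), which is consistent with how the paper uses it in the high-dimensional regime.
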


According to the Theorem \ref{the2}, Theorem \ref{the3} and the norm concentration phenomenon, $\left| {{{\bf{A}}^H}{{\bf{v}}_{{S_2}}}} \right|$ and $\left| {\mathbf{b}} \right|$ can be respectively approximated as
\begin{equation}
\begin{aligned}
{\sigma _4}\sqrt N \left( {\sqrt L  - \frac{{\sqrt \pi  }}{2}\sqrt N } \right) \le \left| {{{\bf{A}}^H}{{\bf{v}}_{{S_2}}}} \right|\\
\le {\sigma _4}\sqrt N \left( {\sqrt L  + \frac{{\sqrt \pi  }}{2}\sqrt N } \right)
\end{aligned}
\label{eq36}  %(17)
\end{equation}
and 
\begin{equation}
\left| {\mathbf{b}} \right| = {\sigma _3}\sqrt {L}.
\label{eq37} %(18)
\end{equation}

Firstly, for the case of $\max \left| {{{\bf{A}}^H}{{\bf{v}}_{{S_2}}}} \right| < \left| {\bf{b}} \right|$ we have
\begin{equation}
{\sigma _4}\sqrt N \left( {\sqrt L  + \frac{{\sqrt \pi  }}{2}\sqrt N } \right) < {\sigma _3}\sqrt L 
\label{eq38},  %(19)
\end{equation}
the solution of \eqref{eq38} can be calculated as
\begin{equation}
\begin{array}{l}
N < {\left( {\frac{{ - \sqrt L  + \sqrt {L + 2\sqrt \pi  \sqrt L \eta } }}{{\sqrt \pi  }}} \right)^2}
\end{array}
\label{eq39},  %(20)
\end{equation}
where $\eta  = \frac{{{\sigma _3}}}{{{\sigma _4}}}$.
 
Then, for the case of $\min \left| {{{\bf{A}}^H}{{\bf{v}}_{{S_2}}}} \right| > \left| {\bf{b}} \right|$. It can be equivalently expressed as 
\begin{equation}
\begin{array}{l}
{\sigma _4}\sqrt N \left( {\sqrt L  - \frac{{\sqrt \pi  }}{2}\sqrt N } \right) > {\sigma _3}\sqrt L
\end{array}
\label{eq40}.  %(21)
\end{equation}
Thus we can get 
\begin{equation}
\begin{array}{l}
N < {\left( {\frac{{\sqrt L  + \sqrt {L - 2\sqrt \pi  \sqrt L \eta } }}{{\sqrt \pi  }}} \right)^2}
\end{array}
\label{eq41},  %(22)
\end{equation}
where $L \ge 4\pi {\eta ^2}$, and $\eta  = \frac{{{\sigma _3}}}{{{\sigma _4}}}$. 
However, this threshold will decrease as $\eta $ increases. Which means the larger $\left| {\mathbf{b}} \right|$, the fewer $N$ are required. This contradicts the fact that more columns of ${{{\mathbf{A}}^H}}$ are needed to offset the amplitude disparity with large ${\mathbf{b}}$, as ${\mathbf{v}}$ can only perform rotations on ${{{\mathbf{A}}^H}}$'s columns.

Therefore, based on the above analysis, only the first case is consistent with reality. Moreover, as the threshold in \eqref{eq39} is derived within the high-dimensional statistical framework, it can be rewritten in the form of the order of $N$ as
\begin{equation}
N < \left\{ {\begin{aligned}
&{O\left( {\sqrt L \eta } \right),\;{\rm{if}}\;\eta  \ge O\left( {\sqrt L } \right)}\\
&{O\left( L \right),\;{\rm{else}}}
\end{aligned}} \right.
\label{eq42}.  %(23)
\end{equation}

The necessary condition means that if the number of RIS elements is less than this threshold, the full-DoF is impossible (i.e., ${S_1} \cap {S_2} = \phi $). Thus, the necessary condition can also be regarded as a lower bound of the number of RIS elements.

\subsection{the Sufficient Condition}
\label{Sufficient_B}
In this subsection, the geometric relationship between ${S_1}$ and ${S_2}$ is employed to derive the sufficient condition of $N$ for achieving full DoF. 

Specifically, since ${S_2} = \left\{ {\left. {\mathbf{v}} \right|\left| {{v_i}} \right| = 1,i \in \left[ {1,N} \right]} \right\}$ (i.e., ${{\bf{v}}^H}{\bf{v}} = N$), ${S_2}$ is a symmetric subset of the sphere with a radius of $\sqrt N $. And ${S_1}$ can be transformed as
\begin{equation}
{S_1}:{{\bf{A}}^H}{\bf{v}}{\rm{  +  }}{\bf{b}} = 0 \Rightarrow {\bf{v}} =  - {\left( {{{\bf{A}}^H}} \right)^ + }{\bf{b}} + {\rm{null}}\left( {{{\bf{A}}^H}} \right)
\label{add1}.   %(24)
\end{equation}
Therefore, the geometric relationship between ${S_1}$ and ${S_2}$ can be represented by Fig. \ref{geometry}.
\begin{figure}[H] % figure1
	\centering
	\centering\includegraphics[width=7cm]{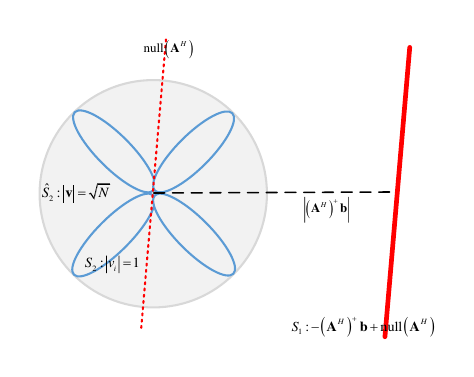}
	\caption{the geometric relationship between ${S_1}$ and ${S_2}$}
	\label{geometry}
\end{figure}
Where the blue line denotes the set ${S_2}$, and the gray sphere denotes the set ${{\hat S}_2} = \left\{ {\left. {\bf{v}} \right|{{\bf{v}}^H}{\bf{v}} = N} \right\}$. The set ${S_1}$ can be obtained by shifting the null space of ${\bf{A}}^H$ with a vector ${\left( {{{\mathbf{A}}^H}} \right)^ + }{\mathbf{b}}$. Consequently, the distance between ${S_1}$ and ${\rm{null}}\left( {{{\bf{A}}^H}} \right)$ can be characterized by $\left| {{{\left( {{{\bf{A}}^H}} \right)}^ + }{\bf{b}}} \right|$ \cite{larsen2021many}.
Therefore, to ensure ${S_1} \cap {{\hat S}_2} \ne \phi$, the intuitive approach is to set the sphere's radius greater than $\left| {{{\left( {{{\bf{A}}^H}} \right)}^ + }{\bf{b}}} \right|$, i.e., 
\begin{equation}
\begin{array}{l}
\sqrt N  \ge \left| {{{\left( {{{\bf{A}}^H}} \right)}^ + }{\bf{b}}} \right|
\end{array}
\label{add44}.  %(25)
\end{equation}
Due to the symmetry of the sphere and the set ${S_2}$, we will use the relaxed ${S_2}$ (i.e., ${{\hat S}_2}$) to determine the approximate sufficient condition.

\begin{theorem}	
	Let ${\mathbf{G}} \in {\mathbb{C}^{L \times N}},\left( {L{\rm{ + }}1 < N} \right)$ be a random matrix with independent ${\cal C}{\cal N}\left( {0,\sigma _1^2} \right)$ elements, ${\mathbf{x}} \sim \mathcal{C}\mathcal{N}\left( {0,\sigma _2^2{{\mathbf{I}}_L}} \right)$ and  $\rho  = \frac{{{\sigma _2}}}{{{\sigma _1}}}$, then the expectation norm of ${{{\mathbf{G}}^ + }{\mathbf{x}}}$ is
	\begin{equation}
	\begin{array}{l}
	E\left( {\left| {{{\bf{G}}^ + }{\bf{x}}} \right|} \right) \le \sqrt {\frac{L}{{N - L - 1}}} \rho
	\end{array}  
	\label{add2}.  %(26)
	\end{equation}
	\label{the4}
\end{theorem}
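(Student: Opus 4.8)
The plan is to reduce the claim to the expected trace of an inverse Wishart matrix and then control the norm by its second moment via Jensen's inequality. First I would note that because $L+1<N$ and $\mathbf{G}$ has independent continuous (Gaussian) entries, $\mathbf{G}$ has full row rank $L$ almost surely, so $\mathbf{G}\mathbf{G}^H$ is invertible and the pseudo-inverse takes the explicit right-inverse form $\mathbf{G}^+ = \mathbf{G}^H\left(\mathbf{G}\mathbf{G}^H\right)^{-1}$. Since $\left(\mathbf{G}\mathbf{G}^H\right)^{-1}$ is Hermitian, a one-line manipulation gives $\left(\mathbf{G}^+\right)^H\mathbf{G}^+ = \left(\mathbf{G}\mathbf{G}^H\right)^{-1}\mathbf{G}\mathbf{G}^H\left(\mathbf{G}\mathbf{G}^H\right)^{-1} = \left(\mathbf{G}\mathbf{G}^H\right)^{-1}$, so the quantity of interest is the quadratic form $\left|\mathbf{G}^+\mathbf{x}\right|^2 = \mathbf{x}^H\left(\mathbf{G}\mathbf{G}^H\right)^{-1}\mathbf{x}$.

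Next I would move the expectation inside the square root using the concavity of $\sqrt{\cdot}$, i.e. $E\left(\left|\mathbf{G}^+\mathbf{x}\right|\right) \le \sqrt{E\left(\left|\mathbf{G}^+\mathbf{x}\right|^2\right)}$, which reduces the problem to the second moment. Because $\mathbf{x}$ is independent of $\mathbf{G}$ with $E\left(\mathbf{x}\mathbf{x}^H\right) = \sigma_2^2\mathbf{I}_L$, I would condition on $\mathbf{G}$ and use the identity $E\left(\mathbf{x}^H\mathbf{M}\mathbf{x}\right)=\mathrm{tr}\left(\mathbf{M}\,E\left(\mathbf{x}\mathbf{x}^H\right)\right)$ (valid for deterministic $\mathbf{M}$, here $\mathbf{M}=\left(\mathbf{G}\mathbf{G}^H\right)^{-1}$ given $\mathbf{G}$) to obtain $E\left(\left|\mathbf{G}^+\mathbf{x}\right|^2\mid\mathbf{G}\right) = \sigma_2^2\,\mathrm{tr}\left(\left(\mathbf{G}\mathbf{G}^H\right)^{-1}\right)$. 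Taking the outer expectation yields $E\left(\left|\mathbf{G}^+\mathbf{x}\right|^2\right) = \sigma_2^2\,E\left(\mathrm{tr}\left(\left(\mathbf{G}\mathbf{G}^H\right)^{-1}\right)\right)$.

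The crux is then evaluating $E\left(\mathrm{tr}\left(\left(\mathbf{G}\mathbf{G}^H\right)^{-1}\right)\right)$. Writing $\mathbf{G}=\sigma_1\tilde{\mathbf{G}}$ with $\tilde{\mathbf{G}}$ having unit-variance entries, $\tilde{\mathbf{G}}\tilde{\mathbf{G}}^H$ is a Wishart matrix of dimension $L$ with $N$ degrees of freedom whose inverse has a known mean proportional to $\mathbf{I}_L$; invoking this inverse-Wishart expectation gives $E\left(\mathrm{tr}\left(\left(\mathbf{G}\mathbf{G}^H\right)^{-1}\right)\right)=\frac{1}{\sigma_1^2}\cdot\frac{L}{N-L-1}$, whence $E\left(\left|\mathbf{G}^+\mathbf{x}\right|^2\right)=\frac{\sigma_2^2}{\sigma_1^2}\frac{L}{N-L-1}=\rho^2\frac{L}{N-L-1}$, and the square root together with the Jensen step delivers the stated inequality. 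I expect this Wishart step to be the main obstacle and the only place demanding care: besides justifying almost-sure invertibility (guaranteed by $L+1<N$), I must pin down the normalization and the degrees-of-freedom convention, since that is exactly what fixes the denominator. If one uses the complex inverse-Wishart mean, the natural denominator is $N-L$; but since $\frac{L}{N-L}\le\frac{L}{N-L-1}$, the claimed bound holds in either convention, so I would simply state the bound using $N-L-1$ as a safe upper estimate.
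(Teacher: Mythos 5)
Your proposal is correct and follows essentially the same route as the paper's Appendix D proof: Jensen's inequality on the square root, the trace identity $E(\mathbf{x}^H\mathbf{M}\mathbf{x})=\mathrm{tr}(\mathbf{M}\,E(\mathbf{x}\mathbf{x}^H))$, and the mean of the inverse-Wishart matrix $(\mathbf{G}\mathbf{G}^H)^{-1}$. Your added remark about the complex versus real Wishart degrees-of-freedom convention (denominator $N-L$ versus $N-L-1$) is a legitimate subtlety the paper glosses over, and you correctly note that the stated bound holds under either convention.
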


\begin{proof}
	Please see Appendix \ref{appD}.	
\end{proof}	

According to the Theorem \ref{the4} and the concentration phenomenon \cite{vershynin2018high}, inequality \eqref{add44} can be approximately as
\begin{equation}
\begin{array}{l}
\sqrt N  \ge \sqrt {\frac{L}{{N - L - 1}}} \eta. 
\end{array}   %(27)
\end{equation}
Then we can get 
\begin{equation}
\begin{array}{l}
N \ge \frac{{L + 1 + \sqrt {{{\left( {L + 1} \right)}^2} + 4L{\eta ^2}} }}{2} \buildrel \Delta \over = {N_1}
\end{array}
\label{eq32_1}.   %(28)
\end{equation}
The simulation results demonstrate that ${N_1}$ is almost consistent with the condition that ${S_1}$ and ${{\hat S}_2}$ intersect with a $50\%$ probability.
This $50\%$ phenomenon can also be found in \cite{amelunxen2014living}. 
Moreover, the condition for $50\%$ intersection probability is symmetrically positioned between the conditions for high (e.g., $99\%$) and low  (e.g., $1\%$) probabilities. Thus, the sufficient condition for ${S_1} \cap {{\hat S}_2} \ne \phi $ can be approximated as $2{N_1}$ minus the necessary condition.
These phenomenons are also illustrated in Fig. \ref{int_cir} of the Section \ref{geometry1}.

Similar to the analysis in Section \ref{Necessary_A}, the necessary condition for ${S_1} \cap {{\hat S}_2} \ne \phi $ can be derived by $\max \left| {{{\bf{A}}^H}{{\bf{v}}_{{{\hat S}_2}}}} \right| < \left| {\bf{b}} \right|$ and Theorem \ref{the5} as
\begin{equation}
\begin{array}{l}
N < {\left( {\frac{{ - \sqrt L  + \sqrt {L + 4\sqrt L \eta } }}{2}} \right)^2} \buildrel \Delta \over = {N_2}.
\label{eq34_1}   %(30)
\end{array}
\end{equation}

\begin{theorem}
	Let ${\mathbf{G}} \in {\mathbb{C}^{L \times N}}$ be a random matrix with independent $\mathcal{C}\mathcal{N}\left( {0,{\sigma ^2}} \right)$ elements and the set $S = \left\{ {\left. {\bf{x}} \right|{{\bf{x}}^H}{\bf{x}} = N} \right\}$. Then
	\begin{equation}
	\left\{ {\begin{array}{*{20}{c}}
		{E\left( {\mathop {\min }\limits_{{\bf{x}} \in S} \left| {{\bf{Gx}}} \right|} \right) \ge \sigma \sqrt N \left( {\sqrt L  - \sqrt N } \right)}\\
		{E\left( {\mathop {\max }\limits_{{\bf{x}} \in S} \left| {{\bf{Gx}}} \right|} \right) \le \sigma \sqrt N \left( {\sqrt L  + \sqrt N } \right)}
		\end{array}} \right.
	\label{eq33_1}    %(29)
	\end{equation}
	\label{the5}
\end{theorem}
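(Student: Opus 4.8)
The plan is to recognize that over the full sphere $S=\left\{\mathbf{x}:\mathbf{x}^H\mathbf{x}=N\right\}$ the two extrema of $\left|\mathbf{Gx}\right|$ are just scaled extreme singular values of $\mathbf{G}$, and then to obtain the bounds by the same Gordon-type Gaussian comparison used for Theorem \ref{the2}, only with the Gaussian width of the sphere in place of that of the modulus-one set. First I would factor out the scaling: writing $\mathbf{G}=\sigma\tilde{\mathbf{G}}$ with $\tilde{\mathbf{G}}$ having $\mathcal{C}\mathcal{N}\left(0,1\right)$ entries and $\mathbf{x}=\sqrt{N}\,\mathbf{u}$ with $\left|\mathbf{u}\right|=1$, the claim reduces to $\sqrt{L}-\sqrt{N}\le E\left[s_{\min}(\tilde{\mathbf{G}})\right]$ and $E\left[s_{\max}(\tilde{\mathbf{G}})\right]\le\sqrt{L}+\sqrt{N}$, where $s_{\min},s_{\max}$ are the smallest and largest singular values. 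The crucial step is the variational representation $\left|\tilde{\mathbf{G}}\mathbf{u}\right|=\max_{\left|\mathbf{y}\right|=1}\mathrm{re}\left(\mathbf{y}^H\tilde{\mathbf{G}}\mathbf{u}\right)$, which turns $s_{\min}$ into the min-max form $\min_{\left|\mathbf{u}\right|=1}\max_{\left|\mathbf{y}\right|=1}\mathrm{re}\left(\mathbf{y}^H\tilde{\mathbf{G}}\mathbf{u}\right)$ and $s_{\max}$ into the corresponding double maximum.

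Next, I would introduce the decoupled auxiliary Gaussian process $\mathrm{re}\left(\mathbf{g}^H\mathbf{u}\right)+\mathrm{re}\left(\mathbf{h}^H\mathbf{y}\right)$, where $\mathbf{g}\in\mathbb{C}^{N}$ and $\mathbf{h}\in\mathbb{C}^{L}$ have independent $\mathcal{C}\mathcal{N}\left(0,1\right)$ entries, and compare it with the bilinear process $\mathrm{re}\left(\mathbf{y}^H\tilde{\mathbf{G}}\mathbf{u}\right)$. Treating real and imaginary parts as a real Gaussian process of doubled dimension, a covariance computation shows the two processes satisfy the hypotheses of Gordon's comparison inequality (equal variances, and the required inequality between increment variances on the index set $\left\{\left|\mathbf{u}\right|=1\right\}\times\left\{\left|\mathbf{y}\right|=1\right\}$). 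For the lower bound I would apply Gordon's min-max inequality to obtain
\[ E\left[s_{\min}(\tilde{\mathbf{G}})\right]\ge E\left[\min_{\left|\mathbf{u}\right|=1}\mathrm{re}\left(\mathbf{g}^H\mathbf{u}\right)\right]+E\left[\max_{\left|\mathbf{y}\right|=1}\mathrm{re}\left(\mathbf{h}^H\mathbf{y}\right)\right]=-E\left|\mathbf{g}\right|+E\left|\mathbf{h}\right|, \]
and for the upper bound I would apply the Sudakov--Fernique (Slepian) inequality to the pure maximum to get $E\left[s_{\max}(\tilde{\mathbf{G}})\right]\le E\left|\mathbf{g}\right|+E\left|\mathbf{h}\right|$. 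Since $\max_{\left|\mathbf{y}\right|=1}\mathrm{re}\left(\mathbf{y}^H\mathbf{z}\right)=\left|\mathbf{z}\right|$ and, by Theorem \ref{the3}, $E\left|\mathbf{h}\right|=\sqrt{L}$ and $E\left|\mathbf{g}\right|=\sqrt{N}$, these become $\sqrt{L}-\sqrt{N}$ and $\sqrt{L}+\sqrt{N}$, which after restoring the factors $\sigma$ and $\sqrt{N}$ yields exactly \eqref{eq33_1}.

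I expect the main obstacle to be verifying the covariance and increment inequalities in the complex setting, that is, checking that passing to the real representation preserves the sign conditions required by Gordon's inequality uniformly over the index sets; this is where the geometry of the sphere (rather than the torus of Theorem \ref{the2}) enters and produces the coefficient $1$ on $\sqrt{N}$ in place of $\frac{\sqrt{\pi}}{2}$. A useful consistency check along the way is that, since the modulus-one set $S_2$ of Theorem \ref{the2} is contained in the sphere $\hat{S}_2$, the present bounds must be \emph{looser} (a smaller lower bound on the min and a larger upper bound on the max); indeed $\sqrt{L}-\sqrt{N}\le\sqrt{L}-\frac{\sqrt{\pi}}{2}\sqrt{N}$ and $\sqrt{L}+\sqrt{N}\ge\sqrt{L}+\frac{\sqrt{\pi}}{2}\sqrt{N}$, confirming the expected monotonicity.
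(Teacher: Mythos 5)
Your proposal is correct and follows essentially the same route as the paper's Appendix E: rescale $\mathbf{G}$ and $\mathbf{x}$ to reduce to a standard complex Gaussian matrix acting on the unit sphere, then invoke Gordon's comparison theorem with the Gaussian width of the sphere equal to $\sqrt{N}$ (the paper cites this as a black box via $E(\min|\tilde{\mathbf{G}}\hat{\mathbf{x}}|)\ge \sqrt{L}-W(\hat{S})$, whereas you unpack the underlying min--max and Sudakov--Fernique comparisons for the extreme singular values). Your monotonicity check against Theorem~\ref{the2} is a nice confirmation but not part of the paper's argument.
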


\begin{proof}
	Please see Appendix \ref{appE}.
\end{proof}

Therefore, the sufficient condition for ${S_1} \cap {{\hat S}_2} \ne \phi $ can be calculated by \eqref{eq32_1} and \eqref{eq34_1} as
\begin{equation}
%N \ge 2{N_1} - {N_2}.
\begin{aligned}
N \ge& 2{N_1} - {N_2}\\
 =& L + 1 + \sqrt {{{\left( {L + 1} \right)}^2} + 4L{\eta ^2}} \\
 &- {\left( {\frac{{ - \sqrt L  + \sqrt {L + 4\sqrt L \eta } }}{2}} \right)^2}
\end{aligned}
\label{eq35_1}.  %(31)
\end{equation}

However, since this result is obtained within the high dimensional statistical framework and based on the relaxed ${S_2}$, it can only be regarded as an approximate sufficient condition, %of $N$ for the full-DoF. 
and can be rewritten as
\begin{equation}
N \ge \left\{ {\begin{aligned}
&{O\left( {\sqrt L \eta } \right),\;{\rm{if}}\;\eta  \ge O\left( {\sqrt L } \right)}\\
&{O\left( L \right),\;{\rm{else}}}
\end{aligned}} \right.
\label{add4}.  %(32)
\end{equation}

The sufficient condition means that if the number of RIS elements is larger than this threshold, the full DoF can almost be guaranteed. %(i.e.,${S_1} \cap {S_2} \ne \phi$ ). 
Thus, this sufficient condition can also be approximated as an upper bound of the RIS elements.

\section{More Precise Necessary Condition of $N$}
\label{Precise}
The conditions of $N$ for full DoF have been derived from high-dimensional probability and geometric perspective in Section IV. It reveals which factors are associated with $N$ and how they affect $N$ asymptotically. 

Since each element of ${\mathbf{A}}$ and ${\mathbf{b}}$ follows Gaussian distribution, their linear combinations also follow Gaussian distribution. If $N$ and $L$ are large, the magnitude's probability density function (PDF) of them is tall and narrow. Therefore,
if the right tail of the magnitude's PDF for the linear combination of ${\mathbf{A}}$ is smaller than the left tail of the corresponding ${\mathbf{b}}$'s PDF, it is almost impossible to find a ${\mathbf{v}}$ such that ${{{\mathbf{A}}^H}{\mathbf{v}}} = -{\mathbf{b}}$.
In this section, a more precise necessary condition is derived based on the above properties.

Specifically, the set $S_1$ can be expanded as
\begin{equation}
\begin{aligned}
{a_{11}}{v_1} +  \cdots  &+ {a_{1N}}{v_N} =  - {b_1}\\
 &\vdots \\
{a_{L1}}{v_1} +  \cdots  &+ {a_{LN}}{v_N} =  - {b_L}
\end{aligned}
\label{eq43},   %(33)
\end{equation}
thus, if all the $L$ equations hold simultaneously, there must be
\begin{equation}
\begin{array}{l}
\left| {\sum\limits_i^L {{b_i}} } \right| = \left| {\sum\limits_i^L {{a_{i1}}} {v_1} +  \cdots  + \sum\limits_i^L {{a_{iN}}} {v_N}} \right|
\end{array}
\label{eq44}.  %(34)
\end{equation}
Since our goal is to find a ${\mathbf{v}} \in {S_1} \cap {S_2}$, after combining ${\bf{v}} \in {S_2}$ with \eqref{eq44}, we can get
\begin{equation}
\begin{array}{l}
{\left| {\sum\limits_i^L {{a_{i1}}} {v_1} +  \cdots  + \sum\limits_i^L {{a_{iN}}} {v_N}} \right| \le \sum\limits_j^N {\left| {\sum\limits_i^L {{a_{ij}}} {v_j}} \right|}  \le \sum\limits_j^N {\left| {\sum\limits_i^L {{a_{ij}}} } \right|} }
\end{array}
\label{eq45}. %(35)
\end{equation}
And \eqref{eq45} can be rewritten as
\begin{equation}
\begin{array}{l}
	\left| {\sum\limits_i^L {{a_{i1}}} {v_1} +  \cdots  + \sum\limits_i^L {{a_{iN}}} {v_N}} \right| + \alpha  = \sum\limits_j^N {\left| {\sum\limits_i^L {{a_{ij}}} } \right|} ,\alpha  \ge 0 
	\end{array}
	\label{eq47}  %(36)
\end{equation}
and $E\left( {\sum\limits_j^N {\left| {\sum\limits_i^L {{a_{ij}}} } \right|} } \right) = N\sqrt L {\sigma _4}\frac{{\sqrt \pi  }}{2}$.

If we use the equation $\left| {\sum\limits_i^L {{b_i}} } \right| = \sum\limits_j^N {\left| {\sum\limits_i^L {{a_{ij}}} } \right|} $ to determine the minimal required $N$, it is equivalent to
\begin{equation}
\begin{array}{l}
	\left| {\sum\limits_i^L {{b_i}} } \right| = \left| {\sum\limits_i^L {{a_{i1}}} {v_1} +  \cdots  + \sum\limits_i^L {{a_{iN}}} {v_N}} \right| + \alpha
	\end{array}
	\label{eq51_1}. %(37)
\end{equation} 
Obviously, the number of variables $N$ calculated by \eqref{eq51_1} is less than the actual value obtained by the equation \eqref{eq44}. 

To reduce the error on $N$, we can also apply the triangle inequality on b, i.e., $\left| {\sum\limits_i^L {{b_i}} } \right| \le \sum\limits_i^L {\left| {{b_i}} \right|} $. It can be expressed as
\begin{equation}
\begin{array}{l}
\sum\limits_i^L {\left| {{b_i}} \right|}  = \left| {\sum\limits_i^L {{b_i}} } \right| + \beta ,\beta  \ge 0
\end{array}
\label{eq46}.  %(38)
\end{equation}

When the $L$ equations with $N$ variables in \eqref{eq43} holds simultaneously, there also must be
\begin{equation}
\left\{ \begin{array}{l}
\sum\limits_i^L {\left| {{b_i}} \right|}  = \sum\limits_i^L {\left| {\sum\limits_j^N {{a_{ij}}} {v_j}} \right|} \\
L \le N
\end{array} \right.
\label{eq48}.  %(39)
\end{equation} 
And because ${{a_{ij}}}$ and ${{b_i}}$ are random variables, ${{v_j}}$ can also be approximated as a random variable but with modulus-1. Then
\begin{equation}
\begin{array}{l}
E\left( {\sum\limits_i^L {\left| {{b_i}} \right|} } \right) = E\left( {\sum\limits_i^L {\left| {\sum\limits_j^N {{a_{ij}}} {v_j}} \right|} } \right)\\
\approx LE\left( {\left| {\sum\limits_j^N {{a_{ij}}} } \right|} \right) = L\sqrt N {\sigma _4}\frac{{\sqrt \pi  }}{2}
\end{array}
\label{eq49}.  %(40)
\end{equation}
Due to $L \le N$, it is obviously that $L\sqrt N {\sigma _4}\frac{{\sqrt \pi  }}{2} \le N\sqrt L {\sigma _4}\frac{{\sqrt \pi  }}{2}$, thus we have
\begin{equation}
\begin{array}{l}
E\left( {\sum\limits_i^L {\left| {{b_i}} \right|} } \right) \le E\left( {\sum\limits_j^N {\left| {\sum\limits_i^L {{a_{ij}}} } \right|} } \right)
\end{array}
\label{eq50}.  %(41)
\end{equation}
Combining \eqref{eq44}, \eqref{eq47}, \eqref{eq46} and \eqref{eq50}, we can get $E\left( \alpha  \right) \ge E\left( \beta  \right)$.
Therefore, if we use the equation 
\begin{equation}
    \sum\limits_i^L {\left| {{b_i}} \right|}  = \sum\limits_j^N {\left| {\sum\limits_i^L {{a_{ij}}} } \right|}
    \label{eq50_1}
\end{equation}
to determine the minimal required $N$, it is equivalent to
\begin{equation}
\begin{array}{l}
\left| {\sum\limits_i^L {{b_i}} } \right| = \left| {\sum\limits_i^L {{a_{i1}}} {v_1} +  \cdots  + \sum\limits_i^L {{a_{iN}}} {v_N}} \right| + \alpha  - \beta 
\end{array}
\label{eq51},  %(42)
\end{equation}
which is closer to \eqref{eq44} than  \eqref{eq51_1}.

Therefore, the necessary condition for the problem \eqref{eq25} can be derived by the extreme
value statistics of \eqref{eq50_1} as
\begin{equation}
\begin{array}{l}
E\left( {\sum\limits_i^L {\left| {{b_i}} \right|} } \right) - {c_1}\sqrt {{\mathop{\rm var}} \left( {\sum\limits_i^L {\left| {{b_i}} \right|} } \right)}  > \\
E\left( {\sum\limits_j^N {\left| {\sum\limits_i^L {{a_{ij}}} } \right|} } \right)  + {c_2}\sqrt {{\mathop{\rm var}} \left( {\sum\limits_j^N {\left| {\sum\limits_i^L {{a_{ij}}} } \right|} } \right)} 
\end{array}
\label{eq53},  %(44)
\end{equation}
where ${a_{ij}} \sim {\cal C}{\cal N}\left( {0,\sigma _4^2} \right)$ and ${{b}_i} \sim {\cal C}{\cal N}\left( {0,\sigma _3^2} \right)$, ${c_1}$ and ${c_2}$ are finite constants greater than zero.
Specifically, if the left tail of the PDF of $\sum\limits_i^L {\left| {{b_i}} \right|}$ is much larger than the right tail of the PDF of $\sum\limits_j^N {\left| {\sum\limits_i^L {{a_{ij}}} } \right|}$, then it is almost impossible to find a ${\mathbf{v}}$ that satisfies the equations of \eqref{eq43}.

The means and variances in \eqref{eq53} can be calculated as follows
\begin{equation}
\begin{array}{l}
E\left( {\sum\limits_i^L {\left| {{b_i}} \right|} } \right) = LE\left( {\left| {{b_i}} \right|} \right) = L\frac{{\sqrt \pi  }}{2}{\sigma _3}
\end{array}
\label{eq54},  %(45)
\end{equation}
\begin{equation}
\begin{array}{l}
{\mathop{\rm var}} \left( {\sum\limits_i^L {\left| {{b_i}} \right|} } \right){\rm{ = }}L{\mathop{\rm var}} \left( {\left| {{b_i}} \right|} \right){\rm{ = }}L\left( {1 - \frac{\pi }{4}} \right)\sigma _3^2
\end{array}
\label{eq55},   %(46)
\end{equation}
\begin{equation}
\begin{array}{l}
E\left( {\sum\limits_j^N {\left| {\sum\limits_i^L {{a_{ij}}} } \right|} } \right) = NE\left( {\left| {\sum\limits_i^L {{a_{ij}}} } \right|} \right) = N\frac{{\sqrt \pi  }}{2}\sqrt {L\sigma _4^2} 
\end{array}
\label{eq56},   %(47)
\end{equation}
\begin{equation}
\begin{array}{l}
{\mathop{\rm var}} \left( {\sum\limits_j^N {\left| {\sum\limits_i^L {{a_{ij}}} } \right|} } \right) = N{\mathop{\rm var}} \left( {\left| {\sum\limits_i^L {{a_{ij}}} } \right|} \right) = N\left( {1 - \frac{\pi }{4}} \right)L\sigma _4^2
\end{array}
\label{eq57}.   %(48)
\end{equation}
Substituting these means and variances into \eqref{eq53}, we have 
\begin{equation}
\begin{array}{l}
L\frac{{\sqrt \pi  }}{2}{\sigma _3} - {c_1}\sqrt {L\left( {1 - \frac{\pi }{4}} \right)\sigma _3^2}  > \\
N\frac{{\sqrt \pi  }}{2}\sqrt {L\sigma _4^2}  + {c_2}\sqrt {N\left( {1 - \frac{\pi }{4}} \right)L\sigma _4^2}  
\end{array}
\label{eq58}.  %(49)
\end{equation}
The solution to the inequality \eqref{eq58} with respect to $N$ is
\begin{equation}
\begin{array}{l}
N < \frac{{c_2^2\left( {\frac{4}{\pi } - 1} \right)}}{2} + \left( {\sqrt L  - {c_1}\sqrt {\frac{4}{\pi } - 1} } \right)\eta \\
- \frac{{{c_2}\sqrt {\frac{4}{\pi } - 1}  \cdot \sqrt {c_2^2\left( {\frac{4}{\pi } - 1} \right) + 4\left( {\sqrt L  - {c_1}\sqrt {\frac{4}{\pi } - 1} } \right)\eta } }}{2}
\end{array}
\label{eq59}.   %(50)
\end{equation}
If $\eta$ is large, then \eqref{eq59} can be approximated as
\begin{equation}
\begin{array}{l}
N < \left( {\sqrt L  - {c_1}\sqrt {\frac{4}{\pi } - 1} } \right)\eta {\rm{ = }}\left( {\sqrt L  + c} \right)\eta 
\end{array}
\label{eq60}.  %(51)
\end{equation}

However, since this result is derived from an approximate form of \eqref{eq44}, we need to add a correction parameter $\bar c$ in \eqref{eq60}. Then the necessary condition of $N$ for the full-DoF can be improved as
\begin{equation}
\begin{array}{l}
N < \left( {\sqrt L  + c} \right)\eta  + \bar c
\end{array}
\label{eq61}.  %(52)
\end{equation}

\section{Sum Rate Maximization  }
Since the DoF serves as a first-order approximation of system capacity, this section proposes a sum-rate maximization problem for the RIS-aided multi-cell system to validate the conclusions drawn in the previous sections.

According to \eqref{eq22}, 
the rate of each user can be expressed as \eqref{eq75} in Appendix \ref{appF}.
Therefore, the sum-rate maximization problem can be denoted as
\begin{equation}
\begin{aligned}
\mathop {\max }\limits_{\bf{v}} \; &{\rm{ W = }}\sum\limits_{i = 1}^G {\sum\limits_{k = 1}^K { R_k^{\left[ i \right]}} } \\
{\rm{s}}.{\rm{t}}.&\left| {{v_n}} \right| = 1,n \in \left[ {1,N} \right]
\end{aligned}
\label{eq76}.    %(66)
\end{equation}

Since ${\bf{v}}$ lies in a complex circle manifold $\mathcal{M} = \left\{ {\left. {{\mathbf{v}} \in {\mathbb{C}^{N \times 1}}} \right|\left| {{v_1}} \right| =  \cdots  = \left| {{v_N}} \right| = 1} \right\}$, we can use manifold optimization \cite{boumal2023introduction} such as RCG to solve ${\bf{v}}$.
In each iteration, the RCG algorithm contains three key steps, namely, Riemannian gradient calculation, transport and retraction \cite{li2022joint,jiang2022interference}.

The Riemannian gradient ${\text{grad}}W\left( {\mathbf{v}} \right)$ is the orthogonal projection of the Euclidean gradient \cite{hjorungnes2007complex},\cite{petersen2008matrix} $ \nabla W\left( {\mathbf{v}} \right)$ onto the tangent space of the manifold, i.e.,
\begin{equation}
\begin{array}{l}
{\text{grad}}W\left( {\mathbf{v}} \right){\text{ = }}\nabla W\left( {\mathbf{v}} \right) - {\text{re}}\left( {\nabla W\left( {\mathbf{v}} \right) \circ {{\mathbf{v}}^*}} \right) \circ {\mathbf{v}}
\end{array}
\label{eq70},    %(61)
\end{equation}
where $ \circ $ denotes the element-wise product, 
and $\nabla W\left( {\mathbf{v}} \right)$ is expressed as \eqref{eq77} in the Appendix \ref{appF}.

Then the search direction d is updated by
\begin{equation}
{{\tilde d}^t} = {\text{grad}}W\left( {{{\mathbf{v}}^t}} \right) + {\lambda _1}{{\rm T}_{{{\mathbf{v}}^t}}}\left( {{{\tilde d}^{t - 1}}} \right)
\label{eq72},       %(62)
\end{equation}
where ${\lambda _1}$ is the the Polak-Ribiere parameter and the transport operation is
\begin{equation}
{{\rm T}_{{{\mathbf{v}}^t}}}\left( {{{\tilde d}^{t - 1}}} \right) = {{\tilde d}^{t - 1}} - {\text{re}}\left( {{{\tilde d}^{t - 1}} \circ {{\left( {{{\mathbf{v}}^t}} \right)}^*}} \right) \circ {{\mathbf{v}}^t}
\label{eq73}.     %(63)
\end{equation}

In order to keep the updated ${\mathbf{v}}$ on the manifold $\mathcal{M}$, we should perform a retraction on it, i.e., element-wise normalization
\begin{equation}
{{\mathbf{v}}^{t + 1}} = {{\left( {{{\mathbf{v}}^t} + {\lambda _2}{{\tilde d}^t}} \right).} \mathord{\left/
		{\vphantom {{\left( {{{\mathbf{v}}^t} + {\lambda _2}{{\tilde d}^t}} \right).} {abs\left( {{{\mathbf{v}}^t} + {\lambda _2}{{\tilde d}^t}} \right)}}} \right.
		\kern-\nulldelimiterspace} {abs\left( {{{\mathbf{v}}^t} + {\lambda _2}{{\tilde d}^t}} \right)}}
\label{eq74},	%(64)
\end{equation}
where ${{\lambda _2}}$ is the Armijo step size. The overall algorithm is summarized in Algorithm \ref{algorithm1}.
\begin{algorithm}%[!t]
	\caption{RCG algorithm for solving problem \eqref{eq76}}%\label{alg:alg1}
	\begin{algorithmic}
		\STATE 
		\STATE \textbf{Initialization:} 
  set the result of the alternating projection algorithm \cite{jiang2022interference} for interference null equations \eqref{eq25} as the initial ${{\mathbf{v}}^0}$, and $\varepsilon  = {10^{ - 3}}$
		
		\STATE \textbf{Iteration:}

		\STATE \hspace{0.5cm}
		1 calculate Riemannian gradient by \eqref{eq70}
		\STATE \hspace{0.5cm}
		2 update the search direction by \eqref{eq72}
		\STATE \hspace{0.5cm}
		3 retract ${\mathbf{v}}$ onto the manifold by \eqref{eq74}

		\STATE \textbf{Stop:} until $\frac{{\left| {{W^{(t + 1)}} - {W^{(t)}}} \right|}}{{{W^{(t)}}}} \le \varepsilon $ or the maximum number of
		iterations is reached
		
		\STATE \textbf{Output:} $W$ and ${\bf{v}}$
	\end{algorithmic}
	\label{algorithm1}
\end{algorithm}
After getting the optimal ${\bf{v}}$, we can use the following formulation to calculate the DoF of each cell.

\begin{equation}
\begin{array}{l}
{\rm{Do}}{{\rm{F}}_i} = \sum\limits_{k = 1}^K {\mathop {\lim }\limits_{SNR_k^{[i]} \to \infty } \frac{{R_k^{[i]}}}{{\log \left( {{\rm{SNR}}_k^{[i]}} \right)}}} ,
\end{array}         %(65)
\end{equation}
where ${{\rm{SNR}}_k^{[i]}}$ is the received SNR of the $k$-th user in cell $i$.

\section{Simulation Result}

\begin{figure*}[b]%[!hbt]
	%\centering
	\captionsetup[subfloat]{font=scriptsize}
	\centering\subfloat[The total rate vs. the number of RIS elements]{\includegraphics[width=7.5cm]{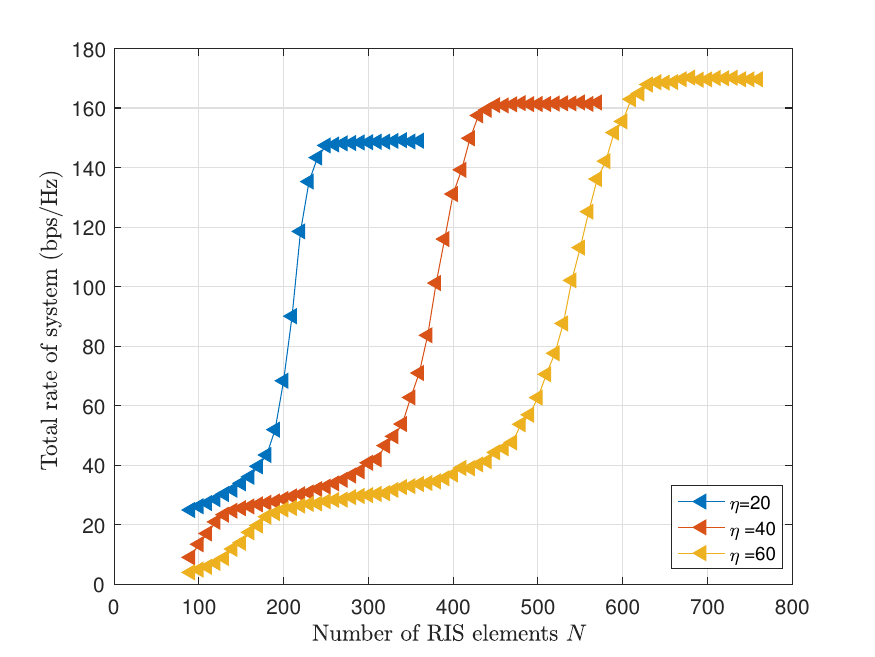}%
		\label{RATE_4}}
	\hfil
	\centering\subfloat[The total DoF vs. the number of RIS elements]{\includegraphics[width=7.5cm]{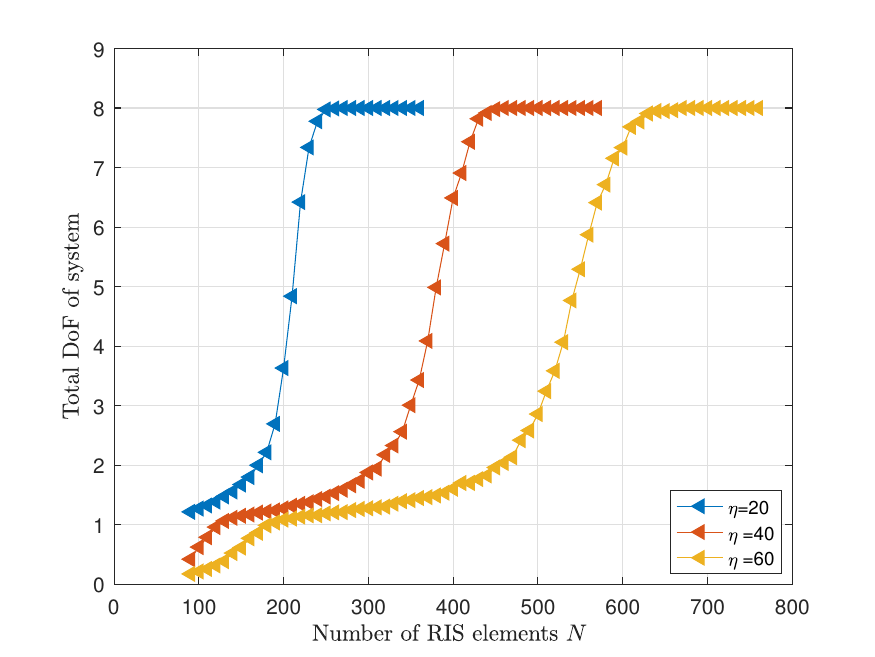}%
		\label{DoF_4}}
	\caption{ The total rate and DoF vs. the number of RIS elements $N$ under the different strength ratio $\eta $ between direct and reflective links ($G=2$, $M=K=4$). }
	\label{fig3}
\end{figure*}

In this section, we present numerical results to quantify the required $N$ for the full-DoF, and identify the factors affecting $N$ and their impact. Since the intensity disparity between direct links ${\mathbf{b}}$ and cascaded reflective links ${\mathbf{A}}$ is the main factor affecting $N$, we first consider the case where all users have the same path loss. The cases with different path losses among users are discussed at the end of this section.

Without loss of generality, we assume that there are two cells, i.e., $G=2$, and the transmission power of each user is the same, i.e., $p_k^{[g]} = P,\forall g \in \left[ {1,2} \right],{\rm{ }}k \in \left[ {1,K} \right]$. 
The system bandwidth is $1{\rm{MHz}}$ and the noise spectral density is $ - {\rm{174dBm/Hz}}$.
We further assume that all channels follow Rayleigh fading, and the RIS is modeled as a linear array for convenience. Specifically, the reflect channel ${\bf{h}}_{Ik}^{[i]} \sim {\cal C}{\cal N}\left( {0,\sigma _1^2{{\bf{I}}_N}} \right)$, each element of ${{\bf{H}}_i}$ is i.i.d.$ \sim {\cal C}{\cal N}\left( {0,\sigma _2^2} \right)$, and the direct channel ${\mathbf{h}}_{{B_i}k}^{[j]} \sim {\cal C}{\cal N}\left( {0,\sigma _3^2{{\bf{I}}_M}} \right)$. And $\sigma _1^2 = \sigma _2^2 =  - 30{\rm{dBm}}$, ${\sigma _3} = \eta {\sigma _1}{\sigma _2}$, $P = 1w$.

\subsection{Sum Rate and DoF Maximization}
The relationship between sum-rate (DoF), strength ratio $\eta$ of the direct link to the cascaded reflective link and the number of RIS elements $N$ is illustrated in Fig. \ref{fig3},
where $G=2$, $M=K=4$. From Fig. \ref{fig3},  we can see that, 
when $N$ is very small, the total rate increases very slowly, and the DoF is much smaller than $G \cdot \min \left\{ {M,K} \right\} = 8$. This is because at this stage RIS cannot manage interference and can only play a role in building reflective links.  
Then, as $N$ further increases, the total rate increases very quickly. This is because at this stage, RIS can not only be used to reflect signals but also to eliminate interference. Finally, when $N$ exceeds a certain threshold, the total rate almost no longer increases. This is because the system has obtained the full-DoF and entered the noise-limited state. At this stage, we can only increase the system capacity by improving the SNR, such as increasing the transmission power. From Fig.\ref{DoF_4} we can also see that the required $N$ for the full-DoF increases with the strength of the direct link. This is because when the direct wave is strong, more reflective elements are needed to amplify the reflected wave and make it comparable to the direct wave.

\subsection{The Oder of $N$}
\label{geometry1}
To verify the analysis of Section \ref{Sufficient_B}, in Fig. \ref{fig5_1}, we plot the situation of ${S_1} \cap {S_2}$ and ${S_1} \cap {{\hat S}_2}$. And 200 independent trials are conducted. The '${\rm{geometry}}$' line denotes the value of ${N_1}$ in \eqref{eq32_1}. The '${\rm{gordon}}\_{\rm{cir}}$' line means the necessary condition for ${S_1} \cap {{\hat S}_2} \ne \phi $ (i.e., the ${N_2}$ of \eqref{eq34_1}), while the '${\rm{gordon}}\_{\rm{modulu}}\_1$' line denotes the necessary condition \eqref{eq39} for ${S_1} \cap {S_2} \ne \phi $. The line '$99\%$ solved' in Fig. \ref{int_cir}  indicates that if $N$ exceeds the corresponding value, ${\bf{v}} \in {S_1} \cap {{\hat S}_2} \ne \phi $ can be guaranteed with a probability of $99\%$.	
From Fig. \ref{int_cir}, we can see that the condition for $50\%$ probability of ${S_1} \cap {{\hat S}_2} \ne \phi $ is symmetrically positioned between the $99\%$ line and the $1\%$ line. 
From Fig. \ref{int_modulus1}, we can see that the sufficient condition \eqref{eq35_1} for ${S_1} \cap {{\hat S}_2} \ne \phi $ can also be approximated as the sufficient condition for ${S_1} \cap {S_2} \ne \phi $. Moreover, the '${\rm{geometry}}$' line is very close to the $1\%$ line of ${S_1} \cap {S_2} \ne \phi $. This is because when the sphere ${{\hat S}_2}$ intersects exactly with ${S_1}$, it marks the critical point at which ${S_2}$ and ${S_1}$ may intersect. In other words, if the radius of the sphere is smaller than the corresponding value at this critical point, %the probability of ${S_1} \cap {S_2} \ne \phi $ is very low.
${S_1} \cap {S_2} \ne \phi $ is almost imposible.

\begin{figure*}%[t]
	%\centering
	\captionsetup[subfloat]{font=scriptsize}
	\centering\subfloat[Conditions for the intersection of ${S_1}$ and ${{\hat S}_2}$ ]{\includegraphics[width=7.5cm]{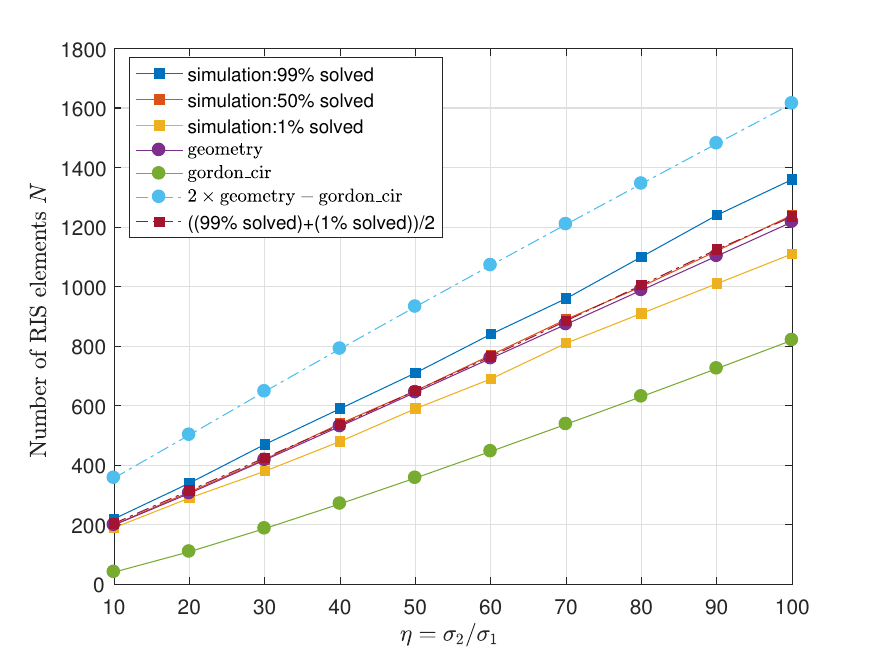}%
		\label{int_cir}}
	\hfil
	\centering\subfloat[Conditions for the intersection of ${S_1}$ and ${S_2}$]{\includegraphics[width=7.5cm]{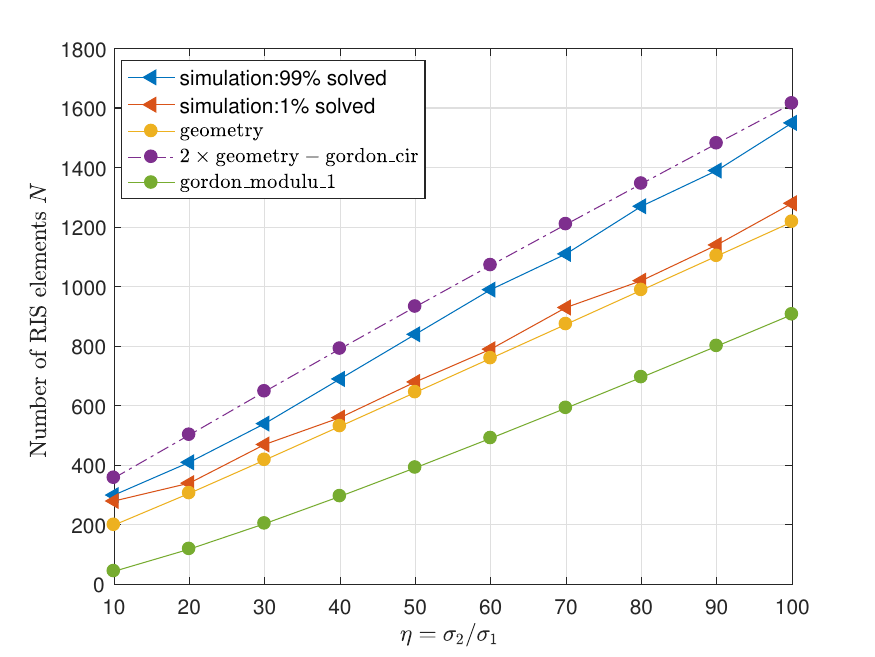}%
		\label{int_modulus1}}
	
	\caption{ The order of sufficient and necessary conditions, ($G=2$, $M=K=6$ ). }
	
	\label{fig5_1}
\end{figure*}

\begin{figure*}[t]%[!hbt]
	%\centering
	\captionsetup[subfloat]{font=scriptsize}
	\centering\subfloat[Full-DoF with probability of $1{\rm{\% }}$ when $\eta $ is large]{\includegraphics[width=7.5cm]{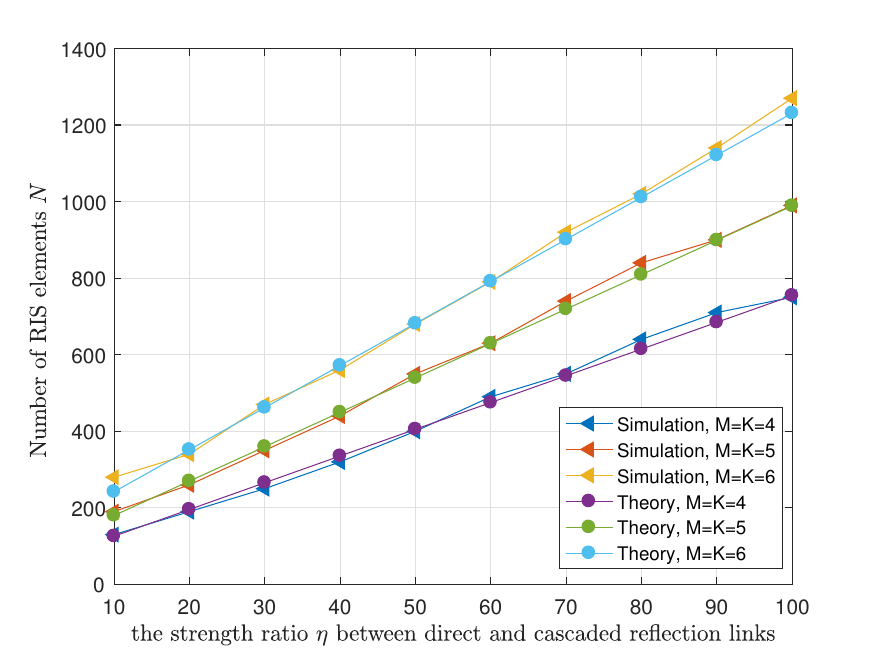}%
		\label{necessary}}
	\hfil
	\centering\subfloat[Full-DoF with probability of $1{\rm{\% }}$ when $\eta $ is small]{\includegraphics[width=7.5cm]{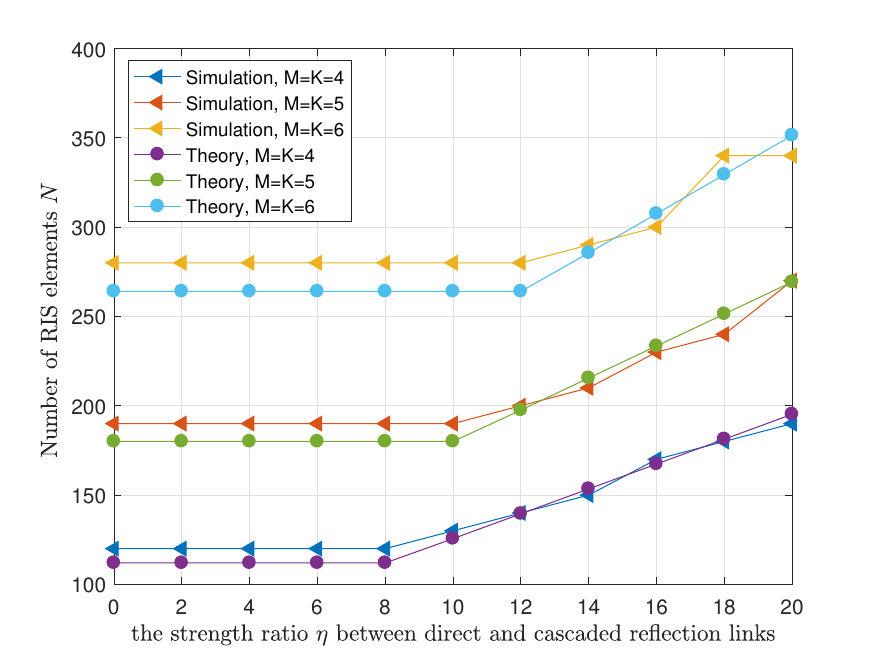}%
		\label{necessary_smal}}
	
	\caption{ The necessary conditions of the required RIS elements $N$ for full-DoF, ($G=2$). }
	
	\label{fig5}
\end{figure*}

\subsection{More Precise Necessary $N$}
To quantify the necessary $N$ for the full-DoF as derived in section \ref{Precise}, in Fig. \ref{fig5}, we consider the case of $M=K=4,5,6$ with $\eta $ ranging from 10 to 100.
'Full-DoF with probability of $1{\rm{\% }}$' means that if $N$ is less than the corresponding threshold, it is almost impossible to obtain the full-DoF.  
From \eqref{eq42} we can see that $\eta$ is the dominant factor affecting $N$ when $\eta$ is large enough, otherwise, $L$ is dominant.  Therefore, we speculate that the correction parameters ${\bar c}$ in \eqref{eq61} is approximately $L$, and the corresponding threshold can be denoted as
\begin{equation}
\begin{array}{l}
\left( {\sqrt L  + c} \right)\eta+L 
\end{array}
\label{eq89},    %(67)
\end{equation}
where $L{\text{ = }}GM\left( {GM - 1} \right)$. When $c \approx  - 0.5$, this value of \eqref{eq89}, plotted as the 'Theory' line in Fig. \ref{necessary}, can fit the actual value very well.

Moreover, since the interference cancellation equations \eqref{eq26}, consisting of $L$ complex equations and $N$ real variables (i.e., $N$ phase shift $\theta $ of the passive RIS), require at least $2L$ variables to ensure solvability. Thus, the minimal $N$ can not less than $2L$, then the threshold \eqref{eq89} can be improved as
\begin{equation}
\begin{array}{l}
\max \left\{ {2L, \left( {\sqrt L  + c} \right)\eta +L } \right\}
\end{array}
\label{eq90}.   %(68)
\end{equation}
The turning point of $\eta$ can be calculated as
\begin{equation}
\begin{array}{l}
\eta  \approx \frac{L}{{\sqrt L  + c}} \approx \sqrt L  
\end{array}
\label{eq91}.    %(69)
\end{equation}
Therefore, the $2k\left( {k - 1} \right)$ criterion of \cite{jiang2022interference} can be regarded as a subset of our result. This is because the $k$-user interference channel is equivalent to a cellular network with $G=k$ cells, each containing $K=1$ single-antenna user, and a BS equipped with $M=1$ antenna.

Fig. \ref{necessary_smal} illustrates the case of smaller $\eta $, where the 'Theory' line is given by \eqref{eq90}.
We can see that when $\eta $ is very small, the required $N$ is the same as when there is no direct link (i.e., $\eta=0 $ ). However, once $\eta $ exceeds a certain value, the required $N$ increases linearly with $\eta$. And this phase transition point of $\eta$ is approximately ${\sqrt L }$.

\subsection{Users With Different Path Losses}
\label{different_path}
Considering that in practice, the location of users is randomly changing. In Fig. \ref{fig7}, we assume that the RIS is positioned at $\left( {{\rm{0,0,0}}} \right)$ meters, BS1 at $\left( {{\rm{15,15,0}}} \right)$, and BS2 at $\left( {{\rm{40,15,0}}} \right)$. The users in cell-1 and cell-2 are uniformly distributed within the rectangular regions $\left[ {5,25} \right] \times \left[ { - 25, - 5} \right]$ and $\left[ {30,50} \right] \times \left[ { - 25, - 5} \right]$, respectively, with the same z-coordinates ($z=-20$). The path loss model is $los\left( d \right) = {T_0} \times {\left( d \right)^{ - \alpha }}$, where $d$ is the link distance in meters, ${T_0} =  - 30{\text{dB}}$ is the path loss at a reference distance, and $\alpha $ is the path loss exponent. For reflect links (user-RIS, RIS-BS), $\alpha $ is set to 2, while for the direct link (user-BS), $\alpha $ is set to 4\cite{oa2018determination}. 
\begin{figure}%[H] % figure1
	\centering\includegraphics[width=7.5cm,clip ]{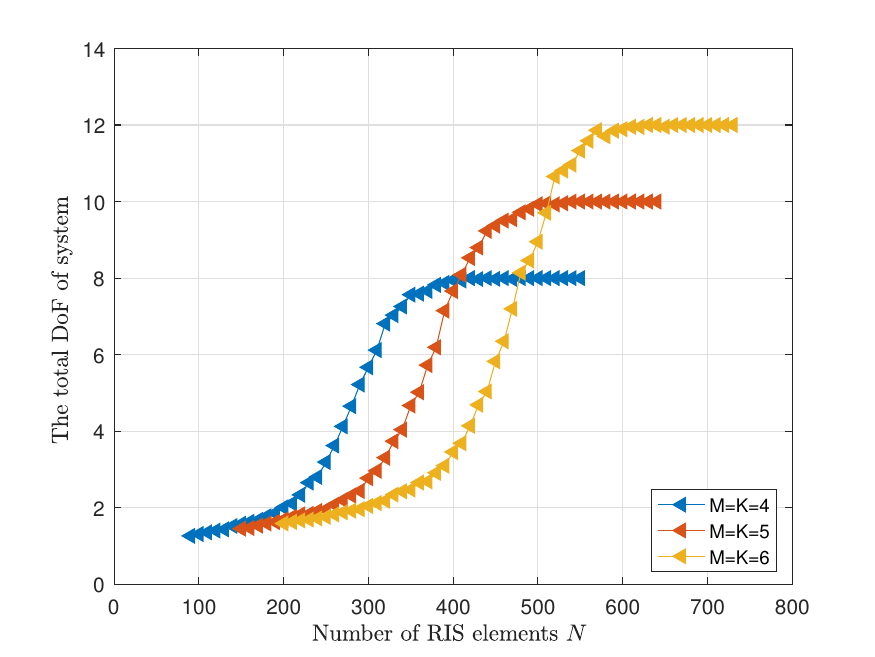}
	\caption{The DoF vs. the number of RIS elements when the user's location randomly changes ($G=2$).}
	\label{fig7}
\end{figure}
From Fig. \ref{fig7}, we can also see that when $N$ is greater than a certain value, the full-DoF can be obtained. Since the average positions of users in cell-1 and cell-2 are $\left[ {15, - 15, - 20} \right]$ and $\left[ {40, - 15, - 20} \right]$ respectively, the average $\eta $ values are 15 and 49 respectively. Thus, the overall average  $\eta $ is 32. Substituting $\eta=32 $ into \eqref{eq35_1}, we obtain the theoretical sufficient $N$ as $391$, $528$ and $677$ for $M=K=4,5,6$, respectively. Similarly, substituting it into \eqref{eq39}, we find the theoretical necessary $N$ as $163$, $194$ and $222$ for $M=K=4,5,6$, respectively.
As shown in Fig. \ref{fig7}, the actual values of $N$ fall within the range of our theoretical findings.

\section{Conclusion}
This paper investigates an RIS-aided multi-cell MIMO system. We show that, in the practical $G$-cellular system with more single-antenna users $K$ than BS antennas $M$ per cell, the full-DoF $\min \left\{ {M,K} \right\}=M$ can be achieved by each cell if the number of RIS elements $N$ is sufficiently large. 
To identify the factors affecting $N$ and their impact, we first assume uniform path losses for the direct and cascaded reflective links respectively.
%To determine the required $N$. %for achieving the full-DoF.
Subsequently, we use Gordon's Theorem and the geometry relationship between the interference null equations and the modulus-1 constraints to derive the necessary and sufficient conditions, respectively. It has been proved that both of them have the same form of threshold. Specifically, if the strength ratio $\eta$ of the direct link to the cascaded reflective link is large enough, the required RIS elements $N$ for achieving the full-DoF is not only linearly related to $\eta$ but also dependent on ${\sqrt L }$ (where $L{\rm{  =  }}GM\left( {GM - 1} \right)$). Conversely, if $\eta$ is smaller, $N$ depends only on $L$. 
Additionally, a more precise necessary threshold for $N$, given by $\max \left\{ {2L, \left( {\sqrt L  + c} \right)\eta +L } \right\}$ (where $c$ is a constant), is derived using the extreme value statistics. We show that the phase transition point of $\eta$, which causes a significant increase in $N$, is approximately ${\sqrt L }$.

We also propose a sum rate maximization problem and adopt the RCG algorithm to solve it. Simulation results show that our theoretical findings are highly consistent with actual performance and are applicable to the practical case with different path losses for each user. These findings can provide valuable theoretical guidance for the practical system design.

\appendix
\subsection{Proof of Theorem \ref{the1}}	
\label{appA}
If the BS can apply a decoding matrix to the received signals by antenna collaboration, and each user's message is decoded from a different row of ${\bf{U}}_i^H{{\bf{y}}_i}$. Then the message of user $k$ in cell $i$ can be decoded as
\begin{equation}
\begin{aligned}
{{{\mathbf{\bar y}}}_{ik}}{\text{ = }}&{\mathbf{u}}_{ik}^H\left( {{{\left( {{\mathbf{A}}_{ik}^{[i]}} \right)}^H}{\mathbf{v}} + {\mathbf{h}}_{{B_i},k}^{[i]}} \right)x_k^{[i]} \hfill \\
&+ \underbrace {{\mathbf{u}}_{ik}^H\sum\limits_{j \ne k}^K {\left( {{{\left( {{\mathbf{A}}_{ij}^{[i]}} \right)}^H}{\mathbf{v}} + {\mathbf{h}}_{{B_i},j}^{[i]}} \right)x_j^{[i]}} }_{{\text{intra - cell interference}}} \hfill \\
&+ \underbrace {{\mathbf{u}}_{ik}^H\sum\limits_{g \ne i}^G {\sum\limits_{j = 1}^K {\left( {{{\left( {{\mathbf{A}}_{ij}^{[g]}} \right)}^H}{\mathbf{v}} + {\mathbf{h}}_{{B_i},j}^{[g]}} \right)x_j^{[g]}} } }_{{\text{inter - cell interference}}} + {{\mathbf{z}}_{ik}} \hfill \\ 
\end{aligned}
\label{eq6}, %(3)
\end{equation}
where ${{\bf{U}}_i} = \left[ {{{\bf{u}}_{i1}}, \cdots ,{{\bf{u}}_{iK}}} \right]$ is the decoding matrix adopted by BS of cell $i$  and ${{\mathbf{u}}_{ik}} \ne {\mathbf{0}} \in {\mathbb{C}^{M \times 1}}$, 
${{\bf{z}}_{ik}}$ is the $k$-th row of ${\bf{U}}_i^H{{\bf{n}}_i}$.
To achieve the full-DoF of each cell, all the intra-cell and inter-cell interference need to be eliminated. Specifically, we need to solve the following problem 
\begin{subequations}
\begin{align}
{\rm{find }}&{\bf{v}},{{\bf{u}}_{ik}},i \in \left[ {1,G} \right],{\rm{ }}k \in \left[ {1,K} \right] \notag \\
{\rm{s}}{\rm{.t}}{\rm{.   }}&{\mathbf{u}}_{ik}^H\left( {{{\left( {{\mathbf{A}}_{ik}^{[i]}} \right)}^H}{\mathbf{v}} + {\mathbf{h}}_{{B_i},k}^{[i]}} \right)x_k^{[i]} \ne 0 \label{10a} \\
&{\bf{u}}_{ik}^H\left( {{{\left( {{\bf{A}}_{ij}^{[i]}} \right)}^H}{\bf{v}} + {\mathbf{h}}_{{B_i},j}^{[i]}} \right)x_j^{[i]} = 0,{\rm{  }}\forall j \ne k \label{10b} \\
&{\bf{u}}_{ik}^H\left( {{{\left( {{\bf{A}}_{ij}^{[g]}} \right)}^H}{\bf{v}} + {\mathbf{h}}_{{B_i},j}^{[g]}} \right)x_j^{[g]} = 0,{\rm{ }}j \in \left[ {1,K} \right]{\rm{,}}\forall {\rm{ g}} \ne i  \label{10c}\\
&\left| {{v_n}} \right| = 1,n \in \left[ {1,N} \right]  \label{10d}
\end{align}	
\label{eq10} %(6)
\end{subequations}
where \eqref{10d} represents the modulus-1 constraint imposed by the passive RIS.

For each cell $i$, in order to eliminate the intra-cell interference while preserving the desired messages, \eqref{10a} and \eqref{10b} should hold simultaneously, which can be reformulated as
\begin{equation}
{\left[ {\begin{array}{*{20}{c}}
		{{\mathbf{u}}_{i1}^H} \\ 
		{{\mathbf{u}}_{i2}^H} \\ 
		\vdots  \\ 
		{{\mathbf{u}}_{iK}^H} 
		\end{array}} \right]_{K \times M}} \times {\left[ {\begin{array}{*{20}{c}}
		{{{{\mathbf{\tilde v}}}_{i1}}}& \cdots &{{{{\mathbf{\tilde v}}}_{iK}}} 
		\end{array}} \right]_{M \times K}} = dia{g_{K \times K}}
\label{eq11},	%(75)
\end{equation}
where ${{{\mathbf{\tilde v}}}_{ik}} = \left( {{{\left( {{\mathbf{A}}_{ik}^{[i]}} \right)}^H}{\mathbf{v}} + {\mathbf{h}}_{{B_i},k}^{[i]}} \right)x_k^{[i]} \in {\mathbb{C}^{M \times 1}}$, and $dia{g_{K \times K}}$ is a $K$-diagonal matrix in which all elements outside the main diagonal are zero.
If we denote the equation \eqref{eq11} as 
\begin{equation}
\begin{array}{l}
{\mathbf{U}} \times {\mathbf{\tilde V}} = {\mathbf{D}},
\end{array}
\label{eq12}    %(76)
\end{equation}
then we have
\begin{equation}
K = rank(D) \le \min \left\{ {rank({\bf{U}}),rank({\mathbf{\tilde V}})} \right\}
\label{eq13}.   %(77)
\end{equation}
Due to the randomness of the direct channel ${\mathbf{h}}_{{B_i},k}^{[i]}$, even without the term of ${\left( {{\mathbf{A}}_{ik}^{[i]}} \right)^H}{\mathbf{v}}$ (i.e., without RIS), we can still have
\begin{equation} 
rank({\mathbf{\tilde V}}) = \min \left\{ {M,K} \right\}.
\label{eq14}    %(78)
\end{equation} 
Combining \eqref{eq13}, \eqref{eq14} and the basic properties of matrix rank 
\begin{equation}
rank({\bf{U}}) \le \min \left\{ {M,K} \right\},
\label{eq15}    %(79)
\end{equation}
we can get 
\begin{equation}
K = rank(D) \le rank({\bf{U}}) \le \min \left\{ {M,K} \right\}
\label{eq16}.   %(80)
\end{equation}
Thus
\begin{equation}
M \ge K
\label{eq17}.   %(81)
\end{equation}

For the full DoF, the inter-cell interference \eqref{10c} also needs to be eliminated, which can be reformulated as 
\begin{equation} 
{\left[ {\begin{array}{*{20}{c}}
		{{\mathbf{u}}_{i1}^H} \\ 
		{{\mathbf{u}}_{i2}^H} \\ 
		\vdots  \\ 
		{{\mathbf{u}}_{iK}^H} 
		\end{array}} \right]_{K \times M}} \times {\left[ {\underbrace {\begin{array}{*{20}{c}}
			{{{{\mathbf{\tilde v}}}_{g1}}}& \cdots &{{{{\mathbf{\tilde v}}}_{gK}}} 
			\end{array}}_{g \ne i,g \in \left[ {1,G} \right]}} \right]_{M \times K\left( {G - 1} \right)}} = {\mathbf{0}}
\label{eq18}    %(82)
\end{equation}
where ${{{\mathbf{\tilde v}}}_{gk}} = \left( {{{\left( {{\mathbf{A}}_{ik}^{[g]}} \right)}^H}{\mathbf{v}} + {\mathbf{h}}_{{B_i},k}^{[g]}} \right)x_k^{[g]} \in {\mathbb{C}^{M \times 1}}$. For ease of expression, we also denote the above equation \eqref{eq18} as 
\begin{equation} 
{\bf{U}} \times {\bf{\bar V}} = {\bf{0}}.
\label{eq19}    %(83)
\end{equation} 
Since the channel realizations ${\mathbf{h}}_{{B_i},k}^{[j]}$ are random, even without the term of ${\left( {{\mathbf{A}}_{ik}^{[g]}} \right)^H}{\mathbf{v}}$ (i.e., without RIS), we still have $rank\left( {{\mathbf{\bar V}}} \right) = \min \left\{ {M,K\left( {G - 1} \right)} \right\}$. Therefore, the equation \eqref{eq18} implies that ${\bf{U}}$ has $\min \left\{ {M,K\left( {G - 1} \right)} \right\}$ linearly independent non-zero solutions, which means that the null space of ${\bf{U}}$ is $\min \left\{ {M,K\left( {G - 1} \right)} \right\}$-dimensional, i.e.,
\begin{equation}
M - rank({\mathbf{U}}) = \min \left\{ {M,K\left( {G - 1} \right)} \right\}
\label{eq20}   %(84)
\end{equation}
According to \eqref{eq16} and \eqref{eq20}, we have 
\begin{equation}
M = K\left( {G - 1} \right) + rank({\bf{U}}) \ge KG
\label{eq21}.   %(85)
\end{equation}
Furthermore, if the direct channel is blocked, i.e., ${{{\mathbf{\tilde v}}}_{jk}} = {\left( {{\mathbf{A}}_{ik}^{[j]}} \right)^H}{\mathbf{v}}x_k^{[j]}$, \eqref{eq16} and \eqref{eq20} are still hold due to the randomness of the cascaded channel ${{\mathbf{A}}_{ik}^{[j]}}$. Therefore, in the case of no direct channel and $M \ge KG$, RIS is only used to create the communication link between users and the BS, and a small amount of reflective elements is sufficient to achieve the full DoF.

The above analyses indicate that the necessary condition for achieving full DoF solely through antenna collaboration is $M \ge KG$. Conversely, if $M < KG$, the full DoF cannot be achieved by antenna collaboration alone. Therefore, in the case of $M < KG$, to ensure the equations \eqref{eq12} and \eqref{eq19} hold simultaneously, we must change the structure of ${{\bf{\tilde V}}}$ and ${{\bf{\bar V}}}$ (such as reducing the rank of ${{\bf{\tilde V}}}$ and ${{\bf{\bar V}}}$ by adjusting ${\bf{v}}$). In other words, if $M < KG$,  the assistance of RIS is essential.

\subsection{Proof of Theorem \ref{the2}}	
\label{appB}
Because ${{\mathbf{Gx}}}$ has the same distribution with $\sigma \sqrt N {\mathbf{\tilde G\tilde x}}$, i.e., 
\begin{equation}
\begin{array}{l}
{\mathbf{Gx}} \cong \sigma \sqrt N {\mathbf{\tilde G\tilde x}}
\end{array}
\label{eq30},    %(86)
\end{equation}
where each element of ${\mathbf{\tilde G}} \in {\mathbb{C}^{N \times L}}$ is i.i.d.$ \sim \mathcal{C}\mathcal{N}\left( {0,1} \right)$ , and ${\mathbf{\tilde x}} \in \tilde S$ is a closed subset of the unit sphere in $N$ dimensions,
\begin{equation}
\begin{array}{l}
\tilde S = \left\{ {\left. {{\mathbf{\tilde x}}} \right|\left| {{{\tilde x}_i}} \right| = \frac{1}{{\sqrt N }},\forall i = 1, \cdots ,N} \right\} \in {\mathbb{S}^{N - 1}}
\end{array}
\label{eq32}.    %(88)
\end{equation}

Thus we have 
\begin{equation}
E\left( {\left| {{\mathbf{Gx}}} \right|} \right) = \sigma \sqrt N  \cdot E\left( {\left| {{\mathbf{\tilde G\tilde x}}} \right|} \right)
\label{eq31}.   %(87)
\end{equation}

According to the Gordon's Theorem \cite{bandeira2015ten}, we can get
\begin{equation}
\begin{array}{l}
E\left( {\min \left| {{\bf{\tilde G\tilde x}}} \right|} \right) \ge {a_L} - W\left( {\tilde S} \right)\\
E\left( {\max \left| {{\bf{\tilde G\tilde x}}} \right|} \right) \le {a_L} + W\left( {\tilde S} \right)
\end{array}
\label{eq33},    %(89)
\end{equation}
where ${a_L} = \sqrt L $ and $W\left( {\tilde S} \right)$ is the Gaussian width \cite{vershynin2018high} of the set ${\tilde S}$, it can be calculated as
\begin{equation}
\begin{aligned}
W\left( {\tilde S} \right) &= E\left( {\mathop {\sup }\limits_{{\mathbf{v}} \in \tilde S} \left| {\left\langle {{\mathbf{g}},{\mathbf{\tilde x}}} \right\rangle } \right|} \right) \hfill \\
&= E\left( {\left\{ {\begin{aligned}
		&{\max {\rm{ }}\left| {{{\bf{g}}^H}{\bf{\tilde x}}} \right| = \left| {\sum\limits_{i = 1}^N {g_i^*{{\tilde x}_i}} } \right|}\\
		&{s.t.{\rm{ }}\left| {{{\tilde x}_i}} \right| = \frac{1}{{\sqrt N }}{\rm{  }}}
		\end{aligned}} \right.} \right) \\
&= E\left( {\frac{1}{{\sqrt N }}\sum\limits_{i = 1}^N {\left| {{g_i}} \right|} } \right) = \sqrt N \frac{{\sqrt \pi  }}{2} \hfill \\ 
\end{aligned}
\label{eq34}    %(90)
\end{equation}
where  ${\mathbf{g}} \sim \mathcal{C}\mathcal{N}\left( {0,{{\mathbf{I}}_N}} \right)$ is a gaussian vector.
Combining \eqref{eq31} and \eqref{eq33} we can get \eqref{eq29},
which completes the proof.

\subsection{Proof of Theorem \ref{the4}}	
\label{appD}
\begin{equation}	
\begin{aligned}
E\left( {\left| {{{\bf{G}}^ + }{\bf{x}}} \right|} \right) &= E\left( {\sqrt {{{\left( {{{\bf{G}}^ + }{\bf{x}}} \right)}^H}{{\bf{G}}^ + }{\bf{x}}} } \right)\\
&\le \sqrt {E\left( {{{\bf{x}}^H}{{\left( {{{\bf{G}}^ + }} \right)}^H}{{\bf{G}}^ + }{\bf{x}}} \right)} \\
&= \sqrt {E\left( {tr\left( {{\bf{x}}{{\bf{x}}^H}{{\left( {{\bf{G}}{{\bf{G}}^H}} \right)}^{ - 1}}} \right)} \right)} \\
&= \sqrt {\sigma _2^2tr\left( {E\left( {{{\left( {{\mathbf{G}}{{\mathbf{G}}^H}} \right)}^{ - 1}}} \right)} \right)}  \\
&= \sqrt {\frac{L}{{N - L - 1}}} \rho 
\end{aligned}
\label{add3},    %(92)
\end{equation}
where ${{\bf{G}}^ + } = {{\bf{G}}^H}{\left( {{\bf{G}}{{\bf{G}}^H}} \right)^{ - 1}}$, and ${\left( {{\bf{G}}{{\bf{G}}^H}} \right)^{ - 1}}$ follows an Inverse-Wishart distribution, which completes the proof.

\subsection{Proof of Theorem \ref{the5}}	
\label{appE}
Just like the proof of Theorem \ref{the2}, we have 
\begin{equation}
E\left( {\left| {{\bf{Gx}}} \right|} \right) = \sigma \sqrt N  \cdot E\left( {\left| {{\bf{\tilde G\hat x}}} \right|} \right)
\label{eq96},    %(93)
\end{equation}
where each element of ${\mathbf{\tilde G}} \in {\mathbb{C}^{N \times L}}$ is i.i.d.$ \sim \mathcal{C}\mathcal{N}\left( {0,1} \right)$ , and ${\bf{\hat x}} \in \hat S$ is on the unit sphere in $N$ dimensions, i.e.,
\begin{equation}
\hat S = \left\{ {\left. {{\mathbf{\hat x}}} \right|{{{\mathbf{\hat x}}}^H}{\mathbf{\hat x}} = 1} \right\} \in {\mathbb{S}^{N - 1}}
\label{eq97}.    %(94)
\end{equation}
The Gaussian width of unit sphere is $\sqrt N $ \cite{vershynin2018high}, i.e.,
\begin{equation}
\begin{array}{l}
W\left( {\hat S} \right) = \sqrt N 
\end{array}
\label{eq98}.    %(95)
\end{equation}
Thus we can get
\begin{equation}
\begin{array}{*{20}{l}}
{E\left( {\min \left| {{\bf{\tilde G\hat x}}} \right|} \right) \ge \sqrt L  - \sqrt N }\\
{E\left( {\max \left| {{\bf{\tilde G\hat x}}} \right|} \right) \le \sqrt L  + \sqrt N }
\end{array}
\label{eq99},    %(96)
\end{equation}
Combining \eqref{eq96} and \eqref{eq99} we can get \eqref{eq33_1},
which completes the proof.

\subsection{Rate and Gradient}	
\label{appF}
The rate of each user in problem \eqref{eq76} and the corresponding gradients for ${\mathbf{v}}$ are respectively denoted as \eqref{eq75} and \eqref{eq77} at the top of this page.

\begin{figure*}[t]
\begin{equation}
\begin{array}{l}
R_k^{\left[ i \right]} = {\log _2}\left( {1 + \frac{{{{\left| {\left( {{\mathbf{A}}_{ik}^{[i]}} \right)_k^H{\mathbf{v}} + {{\left( {{\mathbf{h}}_{{B_i},k}^{[i]}} \right)}_k}} \right|}^2}p_k^{[i]}}}{{\sum\limits_{j \ne k}^K {{{\left| {\left( {{\mathbf{A}}_{ij}^{[i]}} \right)_k^H{\mathbf{v}} + {{\left( {{\mathbf{h}}_{{B_i},j}^{[i]}} \right)}_k}} \right|}^2}p_j^{[i]}}  + \sum\limits_{g \ne i}^G {\sum\limits_{j = 1}^K {{{\left| {\left( {{\mathbf{A}}_{ij}^{[g]}} \right)_k^H{\mathbf{v}} + {{\left( {{\mathbf{h}}_{{B_i},j}^{[g]}} \right)}_k}} \right|}^2}p_j^{[g]}} } {\text{ + }}{\sigma ^2}}}} \right)
\label{eq75}       %(90)
\end{array}
\end{equation}

\begin{equation}
\begin{array}{l}
\nabla W\left( {\mathbf{v}} \right){\text{  =  }}\sum\limits_{i = 1}^G {\sum\limits_{k = 1}^K {\frac{2}{{\ln 2}}\left( {\begin{array}{*{20}{l}}
  {\frac{{\sum\limits_{g = 1}^G {\sum\limits_{j = 1}^K {p_j^{[g]} \cdot {\text{ve}}{{\text{c}}_k}\left( {{\mathbf{A}}_{ij}^{[g]}} \right) \cdot \left( {\left( {{\mathbf{A}}_{ij}^{[i]}} \right)_k^H{\mathbf{v}} + {{\left( {{\mathbf{h}}_{{B_i},j}^{[i]}} \right)}_k}} \right)} } }}{{\sum\limits_{g = 1}^G {\sum\limits_{j = 1}^K {{{\left| {\left( {{\mathbf{A}}_{ij}^{[g]}} \right)_k^H{\mathbf{v}} + {{\left( {{\mathbf{h}}_{{B_i},j}^{[g]}} \right)}_k}} \right|}^2}p_j^{[g]}} } {\text{  +  }}{\sigma ^2}}}} \\ 
  { - \frac{{\sum\limits_{j \ne k}^K {p_j^{[i]} \cdot {\text{ve}}{{\text{c}}_k}\left( {{\mathbf{A}}_{ij}^{[i]}} \right) \cdot \left( {\left( {{\mathbf{A}}_{ij}^{[i]}} \right)_k^H{\mathbf{v}} + {{\left( {{\mathbf{h}}_{{B_i},j}^{[i]}} \right)}_k}} \right)}  + \sum\limits_{g \ne i}^G {\sum\limits_{j = 1}^K {p_j^{[g]} \cdot {\text{ve}}{{\text{c}}_k}\left( {{\mathbf{A}}_{ij}^{[g]}} \right) \cdot \left( {\left( {{\mathbf{A}}_{ij}^{[g]}} \right)_k^H{\mathbf{v}} + {{\left( {{\mathbf{h}}_{{B_i},j}^{[g]}} \right)}_k}} \right)} } }}{{\sum\limits_{j \ne k}^K {{{\left| {\left( {{\mathbf{A}}_{ij}^{[i]}} \right)_k^H{\mathbf{v}} + {{\left( {{\mathbf{h}}_{{B_i},j}^{[i]}} \right)}_k}} \right|}^2}p_j^{[i]}}  + \sum\limits_{g \ne i}^G {\sum\limits_{j = 1}^K {{{\left| {\left( {{\mathbf{A}}_{ij}^{[g]}} \right)_k^H{\mathbf{v}} + {{\left( {{\mathbf{h}}_{{B_i},j}^{[g]}} \right)}_k}} \right|}^2}p_j^{[g]}} } {\text{  +  }}{\sigma ^2}}}} 
\end{array}} \right)} } 
\label{eq77}   %(91)
\end{array}
\end{equation}

\end{figure*}

\bibliographystyle{IEEEtran} 
\bibliography{ref}

\end{document}